\let\newfloat\newfloat@ltx
\def\HC{\mathcal{H}}
\def\ad{^{\dagger}}
\newcommand{\fsnull}[1]{}
\newcommand{\old}[1]{}
\tikzset{every picture/.style=remember picture}
\newcommand{\dya}[1]{\ket{#1}\!\bra{#1}}
\newcommand{\BC}{\mathcal{B}}
\newcommand{\CC}{\mathcal{C}}
\newcommand{\EC}{\mathcal{E}}
\newcommand{\TC}{\mathcal{T}}
\renewcommand{\leq}{\leqslant}
\renewcommand{\Im}{\text{Im}}
\renewcommand{\vec}[1]{\boldsymbol{#1}}  
\newcommand*{\id}{\openone}
\newcommand{\bs}{\textsf{BS}}
\newcommand{\sg}{\sigma }
\newcommand{\SWAP}{{\rm SWAP}}
\newcommand{\mcu}{\mathcal{U}}
\newcommand{\mcm}{\mathcal{M}}
\newcommand{\mcw}{\mathcal{W}}
\newcommand{\mbe}{\mathbb{E}}
\def\be{\begin{equation}}
\def\ee{\end{equation}}
\def\bs{\begin{split}}
\def\e{\end{split}}
\def\ba{\begin{eqnarray}}
\def\bea{\begin{eqnarray}}
\def\tea{\end{eqnarray}}
\def\ea{\end{eqnarray}}
\def\eea{\end{eqnarray}}
\def\d{\delta}
\def\d{\delta}
\definecolor{antonio}{rgb}{.2,.5,.1}
\newcommand\mbb[1]{\mathbb{#1}}
\newcommand\mf[1]{\mathfrak{#1}}
\newtheorem{theorem}{Theorem}
\newtheorem{lemma}{Lemma}
\newtheorem{result}{Result}
\def\be{\begin{equation}}
\def\te{\end{equation}}
\def\ee{\end{equation}}
\def\ba{\begin{eqnarray}}
\def\bea{\begin{eqnarray}}
\def\tea{\end{eqnarray}}
\def\ea{\end{eqnarray}}
\def\eea{\end{eqnarray}}
\begin{document}

\title{Random ensembles of symplectic and unitary states are indistinguishable}

\author{Maxwell West}
\affiliation{School of Physics, University of Melbourne, Parkville, VIC 3010, Australia}
\affiliation{Theoretical Division, Los Alamos National Laboratory, Los Alamos, NM 87545, USA}

\author{Antonio Anna Mele}
\affiliation{Dahlem Center for Complex Quantum Systems, Freie Universität Berlin, 14195 Berlin, Germany}
\affiliation{Theoretical Division, Los Alamos National Laboratory, Los Alamos, NM 87545, USA}

\author{Mart\'{i}n Larocca}
\affiliation{Theoretical Division, Los Alamos National Laboratory, Los Alamos, NM 87545, USA}
\affiliation{Center for Non-Linear Studies, Los Alamos National Laboratory, 87545 NM, USA}

\author{M. Cerezo}
\thanks{cerezo@lanl.gov}
\affiliation{Information Sciences, Los Alamos National Laboratory, Los Alamos, NM 87545, USA}

\begin{abstract}
A unitary state $t$-design is an ensemble of pure quantum states whose moments match up to the $t$-th order those of states uniformly sampled from a $d$-dimensional Hilbert space. Typically, unitary state $t$-designs are obtained by evolving some reference pure state with unitaries from an ensemble that forms a design over the unitary group $\mathbb{U}(d)$, as unitary designs induce state designs. However, in this work we study whether Haar random symplectic states --i.e., states obtained by evolving some reference state with unitaries sampled according to the Haar measure over $\mathbb{SP}(d/2)$-- form unitary state $t$-designs. Importantly, we recall that  random symplectic unitaries fail to be unitary designs for $t>1$, and that, while it is known that symplectic unitaries are universal, this does not imply that their Haar measure leads to a state design. Notably, our main result states that Haar random symplectic states form unitary $t$-designs for all $t$, meaning that their distribution is unconditionally indistinguishable from that of unitary Haar random states, even with tests that use infinite copies of each state. As such, our work showcases the intriguing possibility of creating state $t$-designs using ensembles of unitaries which do not constitute designs over $\mathbb{U}(d)$ themselves, such as ensembles that form $t$-designs over $\mathbb{SP}(d/2)$.  
\end{abstract}

\maketitle

\section{Introduction}
Random quantum states play a ubiquitous role in quantum information sciences as their use is fundamental for randomized measurements protocols~\cite{renes2004symmetric,elben2022randomized,scott2006tight,huang2020predicting}, state discrimination~\cite{ambainis2007quantum,dankert2009exact,nielsen2002simple,dankert2009exact}, quantum sensing~\cite{smith2013quantum}, quantum supremacy experiments~\cite{boixo2018characterizing,arute2019quantum,wu2021strong,dalzell2022randomquantum,oszmaniec2022fermion,huang2021provably}; as well as to understand quantum chaos~\cite{cotler2023emergent,varikuti2024unraveling,roberts2017chaos,choi2023preparing,dowling2023scrambling}, information scrambling~\cite{styliaris2020information,hosur2016chaos}, and quantum machine learning schemes~\cite{cerezo2020variationalreview,larocca2024review}. As such, the quest for ensembles of states that exactly (or approximately) reproduce the statistics of Haar random states has received considerable attention~\cite{harrow2009random,brandao2016local,hunter2019unitary,haferkamp2022random,haferkamp2021improved,brown2010random,nakata2017efficient,Haferkamp2022randomquantum,harrow2018approximate,chen2024efficient,chen2024incompressibility,belkin2023approximate,mittal2023local,schuster2024random,deneris2024exact,iosue2024continuous,webb2016clifford,zhu2016clifford,cotler2023emergent,choi2023preparing}.

In this context, perhaps the simplest way of obtaining state designs is via unitary designs, i.e., sets of unitaries that match (up to a certain degree) the moments of Haar random unitaries over. This is due to the fact that unitary $t$-designs induce state $t$-design~\cite{mele2023introduction}. For instance, the Clifford group has been shown to form a $3$-design~\cite{webb2016clifford,zhu2016clifford}, meaning that an ensemble  of states obtained by evolving some reference state via uniformly sampled Clifford unitaries will match up to the third-moment the distribution of unitary Haar random states. Similarly, sending reference states through sufficiently deep random circuits~\cite{harrow2009random,brandao2016local,hunter2019unitary,haferkamp2022random,haferkamp2021improved,brown2010random,nakata2017efficient,Haferkamp2022randomquantum,harrow2018approximate,chen2024efficient,chen2024incompressibility,belkin2023approximate,mittal2023local,schuster2024random,deneris2024exact} will also produce unitary state $t$-designs as the circuits themselves form unitary designs.

The problem of finding unitary state designs becomes much more  intricate if one wishes to create them from an ensemble of unitaries that does not constitute a design over the unitary group. From a mathematical perspective this would imply that while the moments of the whole unitary matrices themselves do not match those of the unitary group, the moments of a single column do. The previous indicates that the correlations distinguishing one ensemble of unitaries from another are ultimately encoded in the correlations between columns, and not within them.  

In this work we  study whether Haar random symplectic unitaries lead to state designs (see Fig.~\ref{fig:schematic}). Importantly, it is well known that random symplectic unitaries fail to be $t$-designs for $t> 1$~\cite{collins2006integration,garcia2024architectures}. Then, while it has been shown that symplectic unitaries are universal~\cite{zimboras2015symmetry,oszmaniec2017universal}, in the sense that any pure state can be mapped into any other by a symplectic unitary, such a result does not imply that evolving some reference state via random symplectic unitaries will lead to a unitary state design. In this context, our theorem states that Haar random symplectic states form $t$-designs for all $t$. Indeed, our proof strategy relies on computing the moments for this ensemble of states and proving that $\forall t$ we recover a projector into the symmetric subspace of the $t$-th fold tensor product of the Hilbert space~\cite{harrow2013church}.  We also present several implications of these results, such as the fact that random ensembles of symplectic and unitary states are unconditionally indistinguishable. Ultimately,  we hope that our work will serve as blueprints to determining new ways to generate quantum state designs.

\section{Preliminaries}

Let $\HC$ be a $d$-dimensional Hilbert space (with $d$ even), and let $\mathbb{U}(d)$ denote the group of $d\times d$ unitary matrices acting on $\HC$. Then, let $\mathbb{SP}(d/2)\subseteq \mathbb{U}(d)$ be the unitary symplectic group, consisting of all $d\times d$ unitary matrices that satisfy 
\begin{equation}\label{eq:symplectic-unitaries}
    U^{T} \Omega U = \Omega\,,
\end{equation}
where $\Omega$ is a non-degenerate anti-symmetric bilinear such that
\begin{equation}\label{eq:props-Omega}
    \Omega^2=-\id_d\,,\quad \text{and} \quad\Omega\Omega^T=\Omega^T\Omega=\id_d\,.
\end{equation}
Importantly, $\Omega$ is not uniquely defined, and we here assume that $\Omega$ takes its canonical form 
\begin{equation} \label{eq:omega}
	\Omega=\begin{pmatrix} 0& \id_{d/2} \\ - \id_{d/2} & 0\end{pmatrix}\,,
\end{equation}
with $\id_{d/2}$ being the $d/2 \times d/2$ identity matrix. 

Then, let  $\EC=\{\ket{\psi}\}$ be an ensemble of pure quantum states obtained by sampling from $\HC$ according to some  probability distribution $d\psi_\EC$. We say that $\EC$ forms a  unitary state $t$-design if its moments match those of the Haar ensemble up to the $t$-th moment. Here, by Haar ensemble we mean the ensemble obtained by sampling from the uniform distribution $d\psi_{{\rm Haar}}$ over $\HC$, or equivalently, the ensemble obtained by evolving a reference state $\ket{\psi_0}$ with unitaries sampled according to the Haar measure $d\mu_{\mathbb{U}}$ over $\mathbb{U}(d)$. That is, $\EC$ is a $t$-design if
\begin{align}
\int_{\HC} d\psi_\EC \dya{\psi}^{\otimes t} &=\int_{\HC} d\psi_{{\rm Haar}} \dya{\psi}^{\otimes t}\nonumber\\ &=\int_{\mathbb{U}(d)} d\mu_{\mathbb{U}}\,  U^{\otimes t}\dya{\psi_0}^{\otimes t} (U\ad)^{\otimes t}\,.\nonumber
\end{align}
For simplicity of notation we will henceforth denote $\mathbb{E}_{\mathbb{U}}:=\int_{\mathbb{U}(d)} d\mu_{\mathbb{U}}$.
As mentioned above, unitary $t$-designs are sufficient to induce unitary state $t$-designs, but are they \textit{necessary}? For example, it is known that symplectic unitaries are transitive on the orbit of pure states~\cite{zimboras2015symmetry,oszmaniec2017universal}. Therefore, in the case of $t=1$, unitary symplectic designs are enough for unitary state designs. In what follows we will address the question: \textit{Is the ensemble of Haar random symplectic states  a unitary state $t$-design?} Here, by ensemble of Haar random symplectic states we refer to the one obtained by evolving some reference state  $\ket{\psi_0}$ with unitaries sampled according to the Haar measure $d\mu_{\mathbb{SP}}$ over $\mathbb{SP}(d/2)$ (see Fig.~\ref{fig:schematic}). That is,  we want to study whether 
\begin{align}
\mathbb{E}_{\mathbb{SP}}[U^{\otimes t}\dya{\psi_0}^{\otimes t} (U\ad)^{\otimes t}] &\overset{?}{=}\mathbb{E}_{\mathbb{U}}[ U^{\otimes t}\dya{\psi_0}^{\otimes t} (U\ad)^{\otimes t}]\,.\nonumber
\end{align}
Importantly, we note that the choice of reference state $\dya{\psi_0}^{\otimes t}$ is not important. This follows from the fact that the symplectic group is universal~\cite{zimboras2015symmetry,oszmaniec2017universal}, meaning that for any other choice  of reference state $\ket{\psi_0'}$ there exists a unitary $V\in \mathbb{SP}(d/2)$ such that (up to an unimportant global phase) $\ket{\psi_0}=V\ket{\psi_0'}$. Then, any such $V$ can be absorbed in the expectation value $\mathbb{E}_{\mathbb{SP}}$ due to the right- and left-invariance of the Haar measure.

\begin{figure}
    \centering
    \includegraphics[width=1\linewidth]{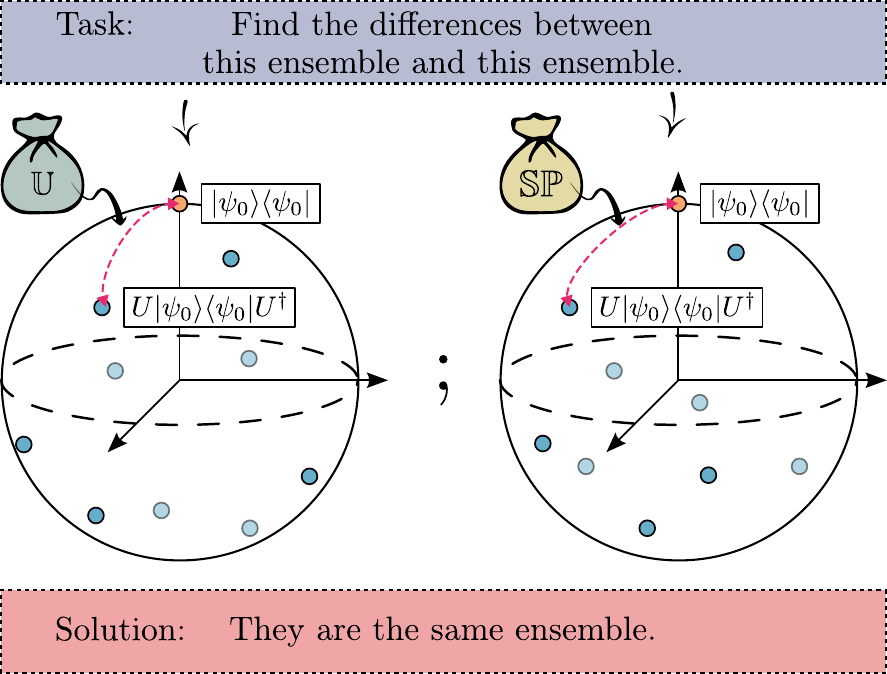}
    \caption{\textbf{Schematic representation of our results.} We show that the ensemble of unitary random states, i.e.,  the set of states obtained by evolving a reference state $\ket{\psi_0}$ with Haar random unitaries, is indistinguishable from that of symplectic random states.  }
    \label{fig:schematic}
\end{figure}

To answer the previous question we find it convenient to review a few concepts from the Weingarten calculus, a mathematical framework that allows for the computation of averages of tensors of unitaries sampled according to the Haar measure $d\mu$ over some group $G$. We refer the reader to~\cite{mele2023introduction} for additional details. In particular, we define the $t$-th order commutant of $G$ as the operator subspace
\begin{equation}
\CC^{(t)}(G)=\{A\in\BC(\HC^{\otimes t})\,|\,[A,U^{\otimes t}]=0\,, \forall U\in G\}\,,
\end{equation}
where $\BC(\HC^{\otimes t})$ denotes the set of bounded operators acting on the Hilbert space $\HC^{\otimes t}$. The importance of the $t$-th order commutant arises from the fact that  $\mathbb{E}_{G}[ U^{\otimes t}(\,\cdot\,) (U\ad)^{\otimes t}]$ projects $\BC(\HC\tt)$ into its subspace $\CC^{(t)}(G)$. Therefore, given a  $D$-dimensional basis $\{P_\mu\}_{\mu=1}^D$ of $\CC^{(t)}(G)$ and some operator $X\in\BC(\HC^{\otimes t})$, we can express 
\begin{equation}\label{eq:weingarten}
\mathbb{E}_{G}[ U^{\otimes t} X (U\ad)^{\otimes t}]=\sum_{\mu,\nu=1}^D (W^{-1})_{\mu\nu}\Tr[P_\mu X]P_\nu\,,
\end{equation}
where $W^{-1}$ is the so-called Weingarten matrix, obtained as the inverse (or pseudo-inverse) of the commutant's Gram matrix with entries $W_{\mu\nu}=\Tr[P_\mu  P_\nu]$. We refer the reader to Ref.~\cite{collins2006integration} for a discussion on the existence of this inverse (or pseudo-inverse) for the case when $G$ is the unitary or symplectic group.   We present a derivation for Eq.~\eqref{eq:weingarten} in  Appendix~\ref{ap:weingarten}.

For the case of $G=\mathbb{U}(d)$, we know from the Schur-Weyl duality that a basis for $\CC^{(t)}(\mathbb{U}(d))$ is given by the representation $P_d$ of the symmetric group $S_t$ that permutes the $d$-dimensional subsystems in the $t$-fold tensor product
Hilbert space, $\HC^{\otimes t}$. Here, $S_t$ is the group of bijections between the set $[t]$ and itself, i.e., the permutations of $t$ objects. As such, for any  $\pi\in S_t$, we have
\begin{equation}\label{eq:rep-S_k}
P_d(\pi)=\sum_{i_1,\dots,i_t=1}^{d} |i_{\sigma^{-1}(1)},\dots,i_{\sigma^{-1}(t)} \rangle\langle i_1,\dots,i_t|\,.
\end{equation}
As mentioned above, the $S_t$-representation $P_d$ spans the commutant of the $t$-fold standard action of the unitary group
\begin{equation}
\Im({P_d}) = \CC^{(t)}(\mbb{U}(d))\, \subset \BC(\HC^{\otimes t})\,.
\end{equation}
An important consequence of this result is that one can readily prove the following equality~\cite{harrow2013church}
\begin{align}
\mathbb{E}_{\mathbb{U}}[ U^{\otimes t}\dya{\psi_0}^{\otimes t} (U\ad)^{\otimes t}]&=\Pi_{{\rm sym}}^{(t)}\nonumber\\
&=\frac{\sum_{\pi \in S_t}P_d(\pi)}{d(d+1)\cdots(d+t-1)}\,,
\end{align}
where $\Pi_{{\rm sym}}^{(t)}$ is the normalized projector into the symmetric subspace of $\HC^{\otimes t}$.

Instead, when $G=\mathbb{SP}(d/2)$, its commutant is a representation $F_d$ of the Brauer algebra $\mathfrak{B}_t(-d)$~\cite{collins2006integration}.  
Let $M_t$ denote the set of all possible pairings of $2t$ object. For example, in the context of graph theory this corresponds to the set of all perfect matchings between the two bipartitions of the complete bipartite graph $K_{t,t}$. It can also be understood as the number of exact set covers of a universe set $U=[2t]$ with subsets of size two. Explicitly, any element $\sg \in M_t$ (a pairing) can be specified by $t$ disjoint pairs
\begin{equation}\label{eq:sigma-decomp}
    \sigma=\{\{\lambda_1, \sigma(\lambda_1)\}\cup\dots\cup\{\lambda_t, \sigma(\lambda_t)\}\}\,.
\end{equation}
Any pairing can be visualized as a diagram with two rows (say one at the top and another at the bottom), each containing $t$ elements, and where each pair in the pairing is depicted as a connection (edge) between two points. Edges can lie entirely within the top row, entirely within the bottom row, or between the top and bottom rows. Evidently, the symmetric group $S_t$ is a subset of $M_t$, that containing the pairings that strictly connect top and bottom rows.

We can define a multiplication of pairings $\sg,\pi$ in the following way: the diagrams of $\sg$ and $\pi$ are concatenated, and a scalar $\d$ is used to account for closed loops formed during the process. Evidently, for $\sg,\pi \in M_t$, $\sg \cdot \pi$ does not in general belong in $M_t$ (except at least one belongs in $S_t\subset M_t$), but belongs in the Brauer algebra, the algebra of finite \( \mathbb{Z}[\delta] \)-linear combinations of the elements in $M_t$, 
\[
B_t(\delta) = \left\{ \sum_{\sg \in M_t} c_\sg \sg \ \Big| \ c_\sg \in \mathbb{Z}[\delta] \right\}\,.
\]
Here, $\mbb{Z}[\d]$ denotes the ring of polynomials in the indeterminate $\d$ with integer coefficients. In other words, a general multiplication of two elements in $M_t$ leads to an element in the set times an integer power of $\delta$. 

As mentioned earlier, there is a natural representation $F_d$ of $B_t(-d)$ on $t$ copies of the $d$-dimensional quantum Hilbert space $\HC$ whose image corresponds to the commutant of the $t$-fold action of the unitary symplectic group
\begin{equation}
\Im({F_d}) = \CC^{(t)}(\mbb{SP}(d/2))\, \subset \BC(\HC^{\otimes t  })\,.
\end{equation}
The representation $F_d$ is explicitly given by (see Ref.~\cite{garcia2024architectures})
\small
\begin{align} 
F_d(\sigma) = \sum_{i_1,\dots,i_{2t}=1}^{d}\prod_{\gamma=1}^{t}  &\Omega_{{\sigma(\lambda_\gamma)}}^{h(\lambda_\gamma,\sigma(\lambda_\gamma))}\ket{i_{t+1},i_{t+2},\dots,i_{2t}} \label{eq:rep-b_t}\\ &\times\bra{i_1,i_2,\dots,i_t} \,\Omega_{{\sigma(\lambda_\gamma)}}^{h(\lambda_\gamma,\sigma(\lambda_\gamma))}\delta_{i_{\lambda_\gamma}, i_{\sigma(\lambda_\gamma)}} \,,\nonumber
\end{align} 
\normalsize
where $h(\lambda_\gamma,\sigma(\lambda_\gamma))=1$ if $\lambda_\gamma,\sigma(\lambda_\gamma)\leq n$ or if $\lambda_\gamma,\sigma(\lambda_\gamma)> n$ and zero otherwise, and where $\Omega_{\sigma(\lambda_\gamma)}$ indicates that the $\Omega$ matrix of Eq.~\eqref{eq:omega} acts on the  $\sigma(\lambda_\gamma)$-th copy of the Hilbert space. Clearly, if $\sigma$ is a permutation in $S_t\subset \mathfrak{B}_t(-d)$, then $P_d(\sigma)=F_d(\sigma)$. (In Fig.~\ref{fig:Brauer} in the appendices we explicitly present some elements of the Brauer algebra for $t=1,2,3$ as well as their schematic representation.)

\section{Main result}

Let $\mf{S}_t(\d)$ denote the subalgebra of $\mf{B}_t(\d)$ consisting of $\mbb{Z}(\d)$-weighted permutations, 
\begin{equation}
\mf{S}_t(\d) = \Big\{ \sum_{\sg \in S_t} c_\sg \sg \ \Big| \ c_\sg \in \mathbb{Z}[\delta] \Big\} \subset \mf{B}_t(\d)    \,.
\end{equation}
The following lemma plays a central role in our derivations.  

\begin{lemma}\label{lem:no-overlap}
Let $\ket{\psi_0}\in\HC$ be an arbitrary pure quantum state,  and consider $\sigma\in \mathfrak{B}_t(-d)\backslash \mf{S}_t(-d)$ (i.e., an element of the Brauer algebra which is not a permutation). Then, it follows that 
\begin{equation}\label{eq:permutation}
F_d(\sigma) \dya{\psi_0}^{\otimes t}=\dya{\psi_0}^{\otimes t}F_d(\sigma) =0\,,
\end{equation}
as well as 
\begin{equation}\label{eq:fpo}
 F_d(\sigma)\Pi_{{\rm sym}}^{(t)}=\Pi_{{\rm sym}}^{(t)}F_d(\sigma) =0\,.
\end{equation}
\end{lemma}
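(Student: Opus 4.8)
The plan is to show that every non-permutation Brauer diagram $\sigma$ contracts, on at least one pair of tensor factors, against the fixed two-copy antisymmetric vector determined by $\Omega$, and that this vector annihilates everything symmetric; since $\dya{\psi_0}^{\otimes t}$ and $\Pi_{{\rm sym}}^{(t)}$ are built out of symmetric vectors, the products in Eqs.~\eqref{eq:permutation}--\eqref{eq:fpo} then vanish. It suffices to prove the identities for $\sigma$ a single pairing diagram, $\sigma\in M_t\setminus S_t$, which is the form in which the lemma enters the Weingarten expansion of $\mathbb{E}_{\mathbb{SP}}$.

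First I would introduce the \emph{symplectic form vector} $\ket{\omega}:=\sum_{j,k=1}^{d}\Omega_{jk}\,\ket{j,k}\in\HC\ts$. Since $\Omega$ is antisymmetric, $\Omega\trp=-\Omega$ (Eqs.~\eqref{eq:props-Omega}--\eqref{eq:omega}), it satisfies $\SWAP\,\ket{\omega}=-\ket{\omega}$, i.e.\ it lies in the two-copy antisymmetric subspace of $\HC\ts$. Two consequences are used: (i) for every $\ket{v}\in\HC$, $\bra{\omega}\big(\ket{v}\ot\ket{v}\big)=\sum_{jk}\Omega_{jk}\,\overline{v_j}\,\overline{v_k}=0$ and $\big(\bra{v}\ot\bra{v}\big)\ket{\omega}=0$, being contractions of an antisymmetric with a symmetric tensor; and (ii) if $P$ is an operator on $\HC^{\otimes t}$ with $\SWAP_{a,b}\,P=P\,\SWAP_{a,b}=P$ for some pair of distinct factors $a\neq b$ --- in particular $P=\Pi_{{\rm sym}}^{(t)}$ --- then $\big(\bra{\omega}_{a,b}\ot\id\big)P=\big(\bra{\omega}_{a,b}\ot\id\big)\SWAP_{a,b}\,P=-\big(\bra{\omega}_{a,b}\ot\id\big)P=0$, and likewise $P\big(\ket{\omega}_{a,b}\ot\id\big)=0$. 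Note that $\ket{\psi_0}^{\otimes t}$ and every vector in the range of $\Pi_{{\rm sym}}^{(t)}$ are symmetric, so (i) and (ii) already contain both halves of the lemma.

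The crux is the following structural fact. If $\sigma\in M_t\setminus S_t$, then $\sigma$ has at least one ``cup'' (a pair inside $\{t+1,\dots,2t\}$) and at least one ``cap'' (a pair inside $\{1,\dots,t\}$): writing $k,\ell,m$ for the numbers of cups, caps and through-strands, counting the $t$ upper endpoints gives $2k+m=t$ and the $t$ lower endpoints gives $2\ell+m=t$, so $k=\ell$, and $\sigma\notin S_t$ forces $m<t$, hence $k=\ell\geq1$. Reading off Eq.~\eqref{eq:rep-b_t}, a cup pair $\{a,b\}\subset\{t+1,\dots,2t\}$ contributes to the ket the factor $\sum_{j}\ket{j}\ot\Omega\ket{j}=-\ket{\omega}$ on output slots $a-t,b-t$, and a cap pair $\{a,b\}\subset\{1,\dots,t\}$ contributes to the bra the factor $\bra{\omega}$ on input slots $a,b$; the through-strands, together with the remaining $\Omega$'s and Kronecker deltas, are collected into auxiliary maps. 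This produces the factorizations
\begin{equation}\label{eq:lemfact}
F_d(\sigma)=\big(\ket{\omega}_{c,d}\ot\id\big)\,G\qquad\text{and}\qquad F_d(\sigma)=G'\,\big(\bra{\omega}_{a,b}\ot\id\big)
\end{equation}
for a cup pair $(c,d)$ on the output, a cap pair $(a,b)$ on the input, and suitable operators $G\colon\HC^{\otimes t}\to\HC^{\otimes(t-2)}$ and $G'\colon\HC^{\otimes(t-2)}\to\HC^{\otimes t}$. I expect the only genuinely delicate point to be carefully parsing the index conventions of Eq.~\eqref{eq:rep-b_t} --- the role of the exponent $h$, which copy of $\HC$ each $\Omega$ acts on, and how the deltas identify paired indices --- in order to read off Eq.~\eqref{eq:lemfact}; the rest is bookkeeping.

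Finally I would combine the two ingredients. From the right-hand factorization in Eq.~\eqref{eq:lemfact} and consequence (i), $F_d(\sigma)\dya{\psi_0}^{\otimes t}=G'\big((\bra{\omega}_{a,b}\ot\id)\ket{\psi_0}^{\otimes t}\big)\bra{\psi_0}^{\otimes t}=0$, and from the left-hand factorization, $\dya{\psi_0}^{\otimes t}F_d(\sigma)=\ket{\psi_0}^{\otimes t}\big(\bra{\psi_0}^{\otimes t}(\ket{\omega}_{c,d}\ot\id)\big)G=0$, since in each case the contracted scalar vanishes by (i). Likewise, from consequence (ii), $F_d(\sigma)\Pi_{{\rm sym}}^{(t)}=G'\big((\bra{\omega}_{a,b}\ot\id)\Pi_{{\rm sym}}^{(t)}\big)=0$ and $\Pi_{{\rm sym}}^{(t)}F_d(\sigma)=\big(\Pi_{{\rm sym}}^{(t)}(\ket{\omega}_{c,d}\ot\id)\big)G=0$. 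These are precisely Eqs.~\eqref{eq:permutation}--\eqref{eq:fpo}.
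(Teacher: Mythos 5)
Your proof is correct and follows essentially the same route as the paper's: both arguments rest on the fact that every closed loop (cup/cap) of a non-permutation Brauer diagram carries the antisymmetric vector $\sum_{j}(\id\otimes\Omega)\ket{jj}$ (antisymmetric because $\Omega^T=-\Omega$), which annihilates the symmetric objects $\dya{\psi_0}^{\otimes t}$ and $\Pi_{{\rm sym}}^{(t)}$. The paper packages this as a sign flip of $F_d(\sigma)$ under left/right multiplication by a transposition of the loop's legs combined with the permutation invariance of the state and projector, while you factor out the loop vector and contract it against the symmetric state directly; your counting argument that cups and caps occur in equal, nonzero numbers just makes explicit a point the paper states loosely.
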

We refer the reader to the Appendix\footnote{ In Appendix~\ref{ap:rep} we also present a representation-theoretical interpretation of Lemma~\ref{lem:no-overlap}. } for the proof of Lemma~\ref{lem:no-overlap} as well as that of  our other results. The following theorem, our main result, is a direct consequence of  Lemma~\ref{lem:no-overlap}.
\begin{theorem}\label{theo:main}
Let $\ket{\psi_0}\in\HC$, then for all $t$ one has
\small
\begin{align}
\mathbb{E}_{\mathbb{SP}}[ U^{\otimes t}\dya{\psi_0}^{\otimes t} (U\ad)^{\otimes t}]&=\frac{\sum_{\pi \in S_t}P_d(\pi)}{d(d+1)\cdots(d+t-1)}\\
&=\mathbb{E}_{\mathbb{U}}[ U^{\otimes t}\dya{\psi_0}^{\otimes t} (U\ad)^{\otimes t}]\,,\nonumber
\end{align}
\normalsize
meaning that symplectic random states form unitary state $t$-designs for all $t$.
\label{th:main}
\end{theorem}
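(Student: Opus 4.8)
The plan is to obtain Theorem~\ref{th:main} as a short consequence of Lemma~\ref{lem:no-overlap}, the Weingarten formula~\eqref{eq:weingarten}, and the already-recalled identity $\mathbb{E}_{\mathbb{U}}[U^{\otimes t}\dya{\psi_0}^{\otimes t}(U\ad)^{\otimes t}]=\Pi_{{\rm sym}}^{(t)}$. Write $X\equiv\dya{\psi_0}^{\otimes t}$ and $\mathcal{T}_{\mathbb{SP}}(\,\cdot\,)\equiv\mathbb{E}_{\mathbb{SP}}[U^{\otimes t}(\,\cdot\,)(U\ad)^{\otimes t}]$. Applying Eq.~\eqref{eq:weingarten} with $G=\mathbb{SP}(d/2)$, and choosing as the basis $\{P_\mu\}_{\mu=1}^{D}$ of $\CC^{(t)}(\mathbb{SP}(d/2))=\Im(F_d)$ a maximal linearly independent subset of the Brauer generators $\{F_d(\sigma)\}_{\sigma\in M_t}$, one reads off the structural fact that $\mathcal{T}_{\mathbb{SP}}(Y)$ depends on the argument $Y$ only through the $D$ scalars $\Tr[P_\mu Y]$.

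The heart of the argument is then to check that $X$ and $\Pi_{{\rm sym}}^{(t)}$ yield the same values of all these functionals, i.e.\ $\Tr[P_\mu X]=\Tr[P_\mu\Pi_{{\rm sym}}^{(t)}]$ for every $\mu$, and this is exactly where Lemma~\ref{lem:no-overlap} enters. If $P_\mu=F_d(\sigma)$ for a non-permutation pairing $\sigma\in M_t\setminus S_t$, then Lemma~\ref{lem:no-overlap} gives $F_d(\sigma)X=0$ and $F_d(\sigma)\Pi_{{\rm sym}}^{(t)}=0$, so both traces vanish. If instead $P_\mu=F_d(\pi)=P_d(\pi)$ with $\pi\in S_t$, then $[P_d(\pi),U^{\otimes t}]=0$ for all $U\in\mathbb{U}(d)$, and inserting $\Pi_{{\rm sym}}^{(t)}=\mathbb{E}_{\mathbb{U}}[U^{\otimes t}X(U\ad)^{\otimes t}]$ and using cyclicity of the trace gives $\Tr[P_d(\pi)\Pi_{{\rm sym}}^{(t)}]=\mathbb{E}_{\mathbb{U}}\Tr[(U\ad)^{\otimes t}P_d(\pi)U^{\otimes t}X]=\Tr[P_d(\pi)X]$. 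Therefore $\mathcal{T}_{\mathbb{SP}}(X)=\mathcal{T}_{\mathbb{SP}}(\Pi_{{\rm sym}}^{(t)})$.

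To close, it remains to note that $\Pi_{{\rm sym}}^{(t)}$ is already a fixed point of the symplectic twirl: since $F_d(\pi)=P_d(\pi)$ on permutations, $\Pi_{{\rm sym}}^{(t)}=\big(d(d+1)\cdots(d+t-1)\big)^{-1}\sum_{\pi\in S_t}F_d(\pi)\in\Im(F_d)=\CC^{(t)}(\mathbb{SP}(d/2))$, hence $\mathcal{T}_{\mathbb{SP}}(\Pi_{{\rm sym}}^{(t)})=\Pi_{{\rm sym}}^{(t)}$. Chaining the last two facts yields $\mathbb{E}_{\mathbb{SP}}[U^{\otimes t}\dya{\psi_0}^{\otimes t}(U\ad)^{\otimes t}]=\Pi_{{\rm sym}}^{(t)}=\mathbb{E}_{\mathbb{U}}[U^{\otimes t}\dya{\psi_0}^{\otimes t}(U\ad)^{\otimes t}]$, which is the theorem; the $t$-design statement is then immediate from the definition, and the irrelevance of the reference state $\ket{\psi_0}$ was already settled in the Preliminaries.

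Once Lemma~\ref{lem:no-overlap} is granted, no real computation remains, so the only thing requiring care is the packaging. The subtle point is that, unlike for the unitary group, the commutant $\Im(F_d)$ strictly contains $\mathrm{span}\{P_d(\pi):\pi\in S_t\}$, and the extra Brauer generators are \emph{not} Hilbert--Schmidt orthogonal to that subspace; so trying to expand $\mathcal{T}_{\mathbb{SP}}(X)$ directly as a (generically overcomplete) combination of diagrams and argue that the non-permutation coefficients drop out is fiddly. Routing everything instead through the linear functionals $\Tr[P_\mu\,\cdot\,]$---equivalently, recognizing $\mathcal{T}_{\mathbb{SP}}$ as the Hilbert--Schmidt orthogonal projection onto $\Im(F_d)$ and verifying that $X-\Pi_{{\rm sym}}^{(t)}$ is orthogonal to every generator $F_d(\sigma)$---collapses the whole proof to the two annihilation identities of Lemma~\ref{lem:no-overlap} plus one elementary trace manipulation.
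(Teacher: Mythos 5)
Your proposal is correct and takes essentially the same route as the paper: you use Lemma~\ref{lem:no-overlap} to show that $\dya{\psi_0}^{\otimes t}$ and $\Pi_{{\rm sym}}^{(t)}$ have identical overlaps $\Tr[P_\mu\,\cdot\,]$ with every Brauer basis element of $\CC^{(t)}(\mathbb{SP}(d/2))$, and then combine the Weingarten expansion with the fact that $\Pi_{{\rm sym}}^{(t)}$ lies in the commutant and is therefore a fixed point of the symplectic twirl---exactly the ingredients of the paper's proof (both its Weingarten-vector version and the alternative argument that left-multiplies by $\Pi_{{\rm sym}}^{(t)}$). Your closing remark that the twirl is the Hilbert--Schmidt orthogonal projection onto $\Im(F_d)$, so that it suffices to check $\dya{\psi_0}^{\otimes t}-\Pi_{{\rm sym}}^{(t)}$ is orthogonal to all generators, is just a cleaner packaging of the same step and sidesteps the explicit eigenvector-of-$W$ manipulation the paper uses to handle the (pseudo-)inverse.
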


It follows from Theorem~\ref{theo:main} that ensembles of unitary and symplectic Haar random states are statistically indistinguishable, even if one is allowed to query an arbitrary number of copies of each state.  As further discussed in Appendix~\ref{ap:indist} we can formalize this statement as:

\begin{restatable}{result}{indist}
    Consider a quantum state $|\psi\rangle$ that is sampled with probability $1/2$ from the distribution of unitary random states and with probability $1/2$ from the distribution of symplectic random states. No quantum experiment can distinguish whether $|\psi\rangle$ was sampled from the unitary or symplectic ensemble with a success probability $>1/2$, even if $|\psi\rangle$ is queried an arbitrarily large number of times.
\end{restatable}

At this point one may wonder if the results in Theorem~\ref{theo:main} extend beyond pure states, i.e., whether the ensembles obtained by evolving some reference mixed state with Haar random unitaries sampled from $\mathbb{U}(d)$ and $\mathbb{SP}(d/2)$  have the same moments. However, we can prove that this is not the case, as the following theorem holds.  
\begin{theorem}\label{theo:rank-two}
    There exists a rank-two quantum state $\rho$ for which
    \begin{equation}
        \mathbb{E}_{\mathbb{U}}[ U^{\otimes 2}\rho^{\otimes 2}(U\ad)^{\otimes 2}]\neq \mathbb{E}_{\mathbb{SP}}[ U^{\otimes 2}\rho^{\otimes 2}(U\ad)^{\otimes 2}]\,.
    \end{equation}
\end{theorem}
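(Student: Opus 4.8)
The plan is to separate the two twirls with a single observable on $\HC^{\otimes 2}$ built from the symplectic form. Define
\begin{equation}
\ket{\omega}:=\sum_{i,j=1}^{d}\Omega_{ij}\,\ket{i}\ket{j}\ \in\ \HC^{\otimes 2}\,.
\end{equation}
Two elementary properties of $\ket{\omega}$ carry the argument. First, since $\Omega^{T}=-\Omega$ we have $\SWAP\ket{\omega}=-\ket{\omega}$, so $\ket{\omega}$ lies in the antisymmetric subspace; in particular $\langle\omega|\omega\rangle=\Tr[\Omega\Omega^{T}]=d$ and $\Pi_{{\rm sym}}^{(2)}\ket{\omega}=0$, consistent with Eq.~\eqref{eq:fpo}. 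Second, $U^{\otimes 2}\ket{\omega}=\ket{\omega}$ for every $U\in\mathbb{SP}(d/2)$: one has $U^{\otimes 2}\ket{\omega}=\sum_{k,l}(U\Omega U^{T})_{kl}\ket{k}\ket{l}$, and from the defining relation $U^{T}\Omega U=\Omega$ together with unitarity $U^{-1}=U\ad$ one gets $\Omega=\overline{U}\,\Omega\,\overline{U}^{T}$, whose complex conjugate (using that $\Omega$ is real) is exactly $U\Omega U^{T}=\Omega$. Concretely, $\dya{\omega}$ is (up to normalization) the image under $F_d$ of the unique non-permutation Brauer diagram on two strands, and it spans the part of $\CC^{(2)}(\mathbb{SP}(d/2))$ absent from $\CC^{(2)}(\mathbb{U}(d))$, the latter being spanned by $\id^{\otimes 2}$ and $\SWAP$; this is also why the effect is invisible for pure states (Lemma~\ref{lem:no-overlap}) but can appear already at rank two.

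With the invariance of $\ket{\omega}$ in hand, the two expectation values are short to compute. By left/right invariance of the Haar measure on $\mathbb{SP}(d/2)$, for any state $\rho$,
\begin{equation}
\bra{\omega}\,\mathbb{E}_{\mathbb{SP}}[ U^{\otimes 2}\rho^{\otimes 2}(U\ad)^{\otimes 2}]\,\ket{\omega}=\bra{\omega}\rho^{\otimes 2}\ket{\omega}\,.
\end{equation}
On the unitary side, the standard $2$-design formula (Eq.~\eqref{eq:weingarten} with the Schur--Weyl basis $\{\id^{\otimes 2},\SWAP\}$) gives $\mathbb{E}_{\mathbb{U}}[ U^{\otimes 2}\rho^{\otimes 2}(U\ad)^{\otimes 2}]=c_{1}\id^{\otimes 2}+c_{2}\SWAP$, with $c_{1},c_{2}$ fixed by $\Tr[\rho^{\otimes 2}]=1$ and $\Tr[\SWAP\,\rho^{\otimes 2}]=\Tr[\rho^{2}]$; contracting with $\ket{\omega}$ and using $\SWAP\ket{\omega}=-\ket{\omega}$ and $\langle\omega|\omega\rangle=d$ yields $\bra{\omega}\mathbb{E}_{\mathbb{U}}[\cdots]\ket{\omega}=(1-\Tr[\rho^{2}])/(d-1)$. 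Hence Theorem~\ref{theo:rank-two} follows once we produce a rank-two $\rho$ with $\bra{\omega}\rho^{\otimes 2}\ket{\omega}\neq(1-\Tr[\rho^{2}])/(d-1)$.

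For this I would take $\rho=p\,\dya{e_{1}}+(1-p)\,\dya{e_{1+d/2}}$ with $p\in(0,1)$, in the computational basis in which $\Omega$ has its canonical form~\eqref{eq:omega}. Expanding $\rho^{\otimes 2}$, the "diagonal" contributions to $\bra{\omega}\rho^{\otimes 2}\ket{\omega}$ vanish because $\phi^{T}\Omega\phi=0$ for every vector $\phi$, so only the cross term survives: $\bra{\omega}\rho^{\otimes 2}\ket{\omega}=2p(1-p)\,|e_{1}^{T}\Omega\,e_{1+d/2}|^{2}=2p(1-p)$, since $\Omega_{1,\,1+d/2}=1$. On the other hand $1-\Tr[\rho^{2}]=2p(1-p)$, so the unitary value equals $2p(1-p)/(d-1)$. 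These disagree for every $p\in(0,1)$ as soon as $d\geq 4$, which proves the claim. I do not anticipate a genuine obstacle: the two delicate points are (i) the implication $U^{T}\Omega U=\Omega\ \Rightarrow\ U\Omega U^{T}=\Omega$, which holds for \emph{unitary} symplectic matrices but fails for symplectic matrices in general, and (ii) excluding $d=2$, where $\mathbb{SP}(1)=\mathbb{SU}(2)$ and the two twirls coincide — consistent with the prefactor $1/(d-1)$ equaling $1$ there.
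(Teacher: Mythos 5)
Your proof is correct, but it takes a genuinely different route from the paper. The paper proves the theorem by brute force: it picks the rank-two state $\rho=\lambda_0\dya{0}^{\otimes n}+\lambda_1\dya{1}^{\otimes n}$ and evaluates both twirls explicitly via Weingarten calculus, using the known second-moment formulas for $\mathbb{U}(d)$ and for $\mathbb{SP}(d/2)$ (the latter involving the three commutant elements $\id\otimes\id$, $\SWAP$, $\Phi_s$), and then compares coefficients. You instead avoid the symplectic Weingarten matrix entirely by introducing the invariant witness $\ket{\omega}=\sum_{ij}\Omega_{ij}\ket{i}\ket{j}$: since $U^{\otimes 2}\ket{\omega}=\ket{\omega}$ for all $U\in\mathbb{SP}(d/2)$ (your derivation of $U\Omega U^{T}=\Omega$ from $U^{T}\Omega U=\Omega$ plus unitarity is right), the symplectic twirl preserves $\bra{\omega}\rho^{\otimes 2}\ket{\omega}$ exactly, while the standard unitary $2$-design formula together with $\SWAP\ket{\omega}=-\ket{\omega}$ forces the unitary twirl to give $(1-\Tr[\rho^2])/(d-1)$; choosing $\rho$ supported on a symplectic pair $\{e_1,e_{1+d/2}\}$ makes these $2p(1-p)$ versus $2p(1-p)/(d-1)$, which differ for $d\geq 4$. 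What your approach buys is economy and transparency: it only needs the unitary moment formula plus one invariance property (and $\dya{\omega}$ is, up to sign and normalization, exactly the extra commutant element $\Phi_s$), and it isolates the mechanism behind Theorem~2 — rank-two cross terms can populate the antisymmetric symplectic-invariant line that pure-state tensor powers and $\Pi_{\rm sym}^{(2)}$ annihilate (Lemma~1). What the paper's computation buys is the full form of both twirled operators, not just one matrix element. Two further merits of your write-up: your state is aligned with the canonical $\Omega$ of Eq.~(3) so the symplectic-specific term is manifestly nonzero (the paper's choice $\ket{0\cdots0},\ket{1\cdots1}$ is not a symplectic pair under that convention, though its conclusion is unaffected), and you make explicit the $d=2$ exclusion, which in the paper is only implicit in the $(d-2)$ denominator.
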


The crucial difference between pure and mixed states is that a result such as that in Eq.~\eqref{eq:permutation} does not hold for states with rank higher than one. That is, given some $\rho=\lambda_0\dya{\psi_0}+\lambda_1\dya{\psi_1}$ (with $\lambda_0+\lambda_1=1$ and $|\langle\psi_0|\psi_1\rangle|^2=0$) one generally has that $F_d(\sigma) \rho^{\otimes t}\neq \rho^{\otimes t}F_d(\sigma) \neq 0$ for $\sigma\in\mathfrak{B}_t(-d)\backslash \mf{S}_t(-d)$.

\section{Applications and open questions}

In this section we review a few  potential applications of our results, as well as new research directions  worth highlighting. We note that the list of topics covered here is not meant to be comprehensive, and that an in-depth exploration of each application and open question goes beyond the scope of this work. As such, we hope that these will inspire the community to further  explore  the implications of Theorem~\ref{theo:main}. As we will see, the common theme of our potential applications is to consider tasks where random unitaries are applied to some quantum state, and then to replace the unitaries from $\mathbb{U}(d)$ to unitaries sampled from $\mathbb{SP}(d/2)$.

\subsection{Approximate state design from shorter depth unitaries}

One of the simplest ways to generate approximate state $t$-designs is to send states through circuits that form approximate unitary $t$-designs~\cite{harrow2009random,brandao2016local,hunter2019unitary,haferkamp2022random,haferkamp2021improved,brown2010random,nakata2017efficient,Haferkamp2022randomquantum,harrow2018approximate,chen2024efficient,chen2024incompressibility,belkin2023approximate,mittal2023local,schuster2024random,deneris2024exact,iosue2024continuous,webb2016clifford,zhu2016clifford,cotler2023emergent,choi2023preparing}. Hence,  one can instead send the states through a circuit which forms an approximate symplectic $t$-design. 

For instance, a typical paradigm to form approximate unitary $t$-designs is via random circuits where local two-qubit gates --which  are independent and identically sampled from $\mathbb{U}(4)$-- act on neighboring pairs of qubits in a brick-like fashion and over some chosen topology. Crucially, if enough layers of this architecture are used, the circuit will form an approximate $t$-designs over $\mathbb{U}(d)$~\cite{harrow2009random,brandao2016local,hunter2019unitary,haferkamp2022random,haferkamp2021improved,brown2010random,nakata2017efficient,Haferkamp2022randomquantum,harrow2018approximate,chen2024efficient,chen2024incompressibility,belkin2023approximate,mittal2023local,schuster2024random,deneris2024exact,iosue2024continuous,webb2016clifford,zhu2016clifford,cotler2023emergent,choi2023preparing,ragone2023unified}. Here, one could instead use local gates whose distribution converges to the Haar over $\mathbb{SP}(d/2)$ instead. A direct calculation using the results in~\cite{garcia2024architectures,deneris2024exact}, explicitly shown in  Appendix~\ref{ap:brick}, leads to the following result:
\begin{result}\label{res:params}
A random circuit in a one-dimensional topology composed of symplectic random two-qubit gates can form an $\epsilon$-approximate  $2$-design with $~60\%$ less parameters than a circuit with the same topology and with unitary random two-qubit gates. 
\end{result}

We expect that symplectic circuits could also  improve more advanced techniques to form approximate designs, such as those recently introduced in Ref.~\cite{schuster2024random}. 

\subsection{Symplectic classical shadows}

As a second application of our results, we demonstrate that any classical shadows protocol~\cite{huang2020predicting,zhao2021fermionic,van2022hardware,bertoni2022shallow,king2024triply,jerbi2023shadows,koh2022classical,chen2021robust,sauvage2024classical,wan2022matchgate,bermejo2024quantum,angrisani2024classically} based on sampling random unitaries from the unitary group is equivalent to a ``symplectic classical shadows'' protocol where one samples random unitaries from the symplectic group.  That is, 
\begin{restatable}{result}{shad}\label{result:shad}
Symplectic classical shadows are equivalent to unitary classical shadows.
\end{restatable}

Interestingly, this result holds true independently of the state from which the shadows are taken (i.e., the target state can be pure or mixed). The equivalence between symplectic and unitary shadows follows by considering the shadow protocol in  the Heisenberg picture. Specifically, and as we discuss in Appendix~\ref{ap:shad}, Result~\ref{result:shad} follows from the fact that a classical shadows protocol -- which consists of a choice of a unitary ensemble $\mathcal{U}$ and a measurement basis $\{\ket{w}\}_w$ -- is completely determined by the values of the expectations
\begin{equation}
\mathbb{E}_\mcu\left[U ^{\otimes t}\left(\sum_w\ketbra{w}^{\otimes t}\right) U ^{\dagger \otimes t}\right],\quad \text{for} \quad t=2,3\,.
\label{eq:shadowsmoments}
\end{equation}
Thus, as we can see in Eq.~\eqref{eq:shadowsmoments} the properties of the shadow protocol are determined by the expectation value over the (pure) measurement basis. 
In particular, any  unitary ensemble which induces a state 3-design with respect to any computational basis reference state leads to a classical shadow protocol identical to that of the full unitary group.  For example, this was exploited in Ref.~\cite{huang2020predicting} to conclude that sampling random Clifford gates, forming as they do a unitary state 3-design (in fact, a unitary 3-design), is sufficient to effect unitary shadows. Having seen in Theorem~\ref{theo:main} that the symplectic random states form unitary state $t$-design for all $t$, it  follows that symplectic classical shadows are likewise equivalent to their unitary counterpart.

\subsection{Symplectic state tomography}

It is known~\cite{guta2018faststatetomographyoptimal,Lowe2021} that any ensemble forming a unitary state $2$-design (or any unitary ensemble for which the expression in Eq.~\eqref{eq:shadowsmoments} yields, for $t=2$, the same result as the Haar unitary ensemble) leads to a full state tomography protocol using only single-copy measurements, which is almost optimal in terms of sample complexity~\cite{chen2023doesadaptivityhelpquantum} among all protocols restricted to single-copy measurements. The term ``almost'' refers to the fact that the sample complexity bound matches the information-theoretic lower bound~\cite{chen2023doesadaptivityhelpquantum} for single-copy possibly adaptive tomography  protocols, up to logarithmic factors in the Hilbert space dimension $d$.
Specifically, it has been shown that performing (state $2$-design) randomized single-copies measurements over $\tilde{O}(d^3/\varepsilon^2)$ copies of an unknown state $\rho$ allows one to construct an estimator $\hat{\rho}$ that, with high probability, is $\varepsilon$-close to $\rho$ in trace norm (where $\tilde{O}$ hides $\log(d)$ factors).
Thus, based on our Theorem~\ref{th:main}, we have that:
\begin{result}\label{crllr:tomo}
There exists a full state tomography protocol based on random symplectic $2$-design unitaries followed by computational basis measurements on single copies of the unknown state, which achieves provably optimal sample complexity (up to logarithmic factors in the Hilbert space dimension) within the class of single-copy measurement protocols.
\end{result}
We note that this protocol~\cite{guta2018faststatetomographyoptimal,Lowe2021} is closely related to the classical shadow protocol~\cite{huang2020predicting}, as it employs the same estimator for the unknown state.

\subsection{Open question: Can we find efficient  symplectic $t$-designs?}
One of the most intriguing implications of our results is that any ensemble of unitaries which forms a symplectic $t$-design can be used  in lieu of Haar random symplectic unitaries to form a unitary state $t$-design.  This  raises the immediate question of whether we can find sets of efficiently implementable unitaries which form a design over the symplectic group. For example, we pose the question: 

\medskip
\textit{Do the unitaries in ${\rm Cl}(n)\cap \mathbb{SP}(2^n/2)$ form a $3$-design over $\mathbb{SP}(2^n/2)$?}
\medskip

Here, ${\rm Cl}(n)$ denotes the $n$-qubit Clifford group.
Such a question is motivated by the fact that it is known that intersecting the Clifford with other subgroups of the unitary group leads to $3$-designs~\cite{hashagen2018real,nebe2001invariants,nebe2006self,wan2022matchgate,Mitsuhashi_2023}.

\medskip

\section{Discussion}

In this work we show that ensembles of symplectic random states are indistinguishable from ensembles of unitary random states, in the sense that their moments match to all orders.  Despite the relative simplicity of our main theorem, as well as of its associated proof, Theorem~\ref{theo:main} has important conceptual and practical implications.

On the conceptual side, the fact that unitary state designs can be formed by ensembles of unitaries which do not form designs over the unitary group is quite intriguing. Indeed, one may wonder if there exists other ensembles beyond Haar random symplectic unitaries from which unitary state designs can be found. More generally, our work also paves the way to study state designs with respect to other groups. That is, if we are interested in forming a state design over a group $G$, are there subgroups of $G$ which lead to $G$ state designs (even if they are not designs over $G$)? 

Then, on the practical side, our work has direct implications for quantum information protocols that require unitary state $t$-designs. Indeed, instead of evolving some reference state by   random unitaries from $\mathbb{U}(d)$, one could instead evolve them by random unitaries from $\mathbb{SP}(d/2)$ or even by unitaries from some ensemble which forms a $t$-design over $\mathbb{SP}(d/2)$. Ultimately, the goal is to find smaller ensembles to sample from, as we can expect that their implementations will be simpler. In this context,  we have left open as an important question whether we can find efficiently implementable symplectic Clifford circuits which form $3$-designs over $\mathbb{SP}(d/2)$. Indeed, if they do, symplectic Cliffords could become the standard for implementing up-to-three unitary state designs.

\section{Acknowledgments}

The authors thank Diego Garc\'ia-Mart\'in, Manuel G. Algaba, and Yigit Subasi for insightful discussions. MW acknowledges the support of the Australian government research training program scholarship and the IBM Quantum Hub at the University of Melbourne. AAM acknowledges support by the German Federal Ministry for Education and Research (BMBF) under the project FermiQP. MW and AAM were supported by the U.S. DOE through a quantum computing program sponsored by the Los Alamos National Laboratory (LANL) Information Science \& Technology Institute. ML and MC acknowledge support by the Laboratory Directed Research and Development (LDRD) program of LANL under project numbers 20230049DR and 20230527ECR. ML was also supported by  the Center for Nonlinear Studies at LANL. MC also acknowledges initial support by from LANL ASC Beyond Moore’s Law project.

\bibliography{quantum,ref}

\onecolumngrid
\appendix
\setcounter{lemma}{0}
\setcounter{theorem}{0}
\setcounter{remark}{0}

\section{Weingarten calculus}\label{ap:weingarten}

As discussed in the main text, the twirl  $\mathbb{E}_{G}[ U^{\otimes t}(\cdot) (U\ad)^{\otimes t}]$ is a projection into $\CC^{(t)}(G)$, meaning that  given a  $D$-dimensional basis $\BC=\{P_\nu\}_{\nu=1}^D$ of $\CC^{(t)}(G)$ we have 
\begin{equation}\label{eq:twirled_X_comm}
    \mathbb{E}_{G}[ U^{\otimes t}(X) (U\ad)^{\otimes t}]=\sum_{\nu=1}^{D} c_\nu(X) P_\nu\,, \quad \text{with} \quad P_\mu\in\BC\,.
\end{equation}
To solve Eq.~\eqref{eq:twirled_X_comm}, we needs to find the $D$ unknown coefficients $\{c_\nu(X)\}_{\nu=1}^D$, which can be accomplished by determining a set of $D$ equations  and solving the resulting linear system problem. In particular, multiplying both sides of Eq.~\eqref{eq:twirled_X_comm} by some $P_\mu\in \BC$ leads to 
\begin{align}
   P_\mu  \mathbb{E}_{G}[ U^{\otimes t}(X) (U\ad)^{\otimes t}]&=\sum_{\nu=1}^{D} c_\nu(X) P_\mu P_\nu\nonumber \\
 \rightarrow\quad \mathbb{E}_{G}[ U^{\otimes t}(P_\mu X) (U\ad)^{\otimes t}]&=\sum_{\nu=1}^{D} c_\nu(X) P_\mu P_\nu\,,\label{eq:twirled_X_comm_2}
\end{align}
where in the second line we have used the fact that $P_\mu$ belongs to $\CC^{(t)}(G)$.
Then, if we take the trace of both sides of Eq.~\eqref{eq:twirled_X_comm_2} we obtain
\begin{align}\label{eq:LSP}
   \Tr[P_\mu X ] =\sum_{\nu=1}^{D}  \Tr[P_\mu P_\nu ] c_\nu(X)\,,
\end{align}
where we used the fact that $\Tr[ \mathbb{E}_{G}[ U^{\otimes t}(X) (U\ad)^{\otimes t}]]=\Tr[X]$. Repeating Eq.~\eqref{eq:LSP} for all $P_\mu$ in $ \BC$ leads to $D$ equations. Thus, we can find the vector of unknown coefficients $\vec{c}(X)=(c_1(X),\ldots,c_D(X))$ by solving
\begin{equation}
    W\cdot \vec{c}(X)=\vec{b}(X)\,,
    \label{eq:lins}
\end{equation} 
where $\vec{b}(X)=(\Tr[X P_1],\ldots, \Tr[X P_D])$. Here,  $W$ is a $D\times D$  Gram matrix of the commutant whose entries are $(W)_{\mu\nu}=\Tr[P_\mu P_\nu]$. We can then find that a valid solution to the previous Eq.\eqref{eq:lins} is given by
\begin{equation}\label{eq:inverse-vec-c}
    \vec{c}(X)=W^{-1}\cdot\vec{b}(X)\,,
\end{equation}
where $W^{-1}$ is the pseudo-inverse of $W$, and so recover Eq.~\eqref{eq:weingarten}.

\section{Proof of Lemma~\ref{lem:no-overlap}}

Here we provide a proof for Lemma~\ref{lem:no-overlap}, which we restate for convenience.

\begin{lemma}\label{lem:no-overlap-SM}
Let $\ket{\psi_0}\in\HC$ be an arbitrary quantum state,  and let $\sigma$ be  such that $\sigma\in \mathfrak{B}_t(-d)\backslash \mf{S}_t(-d)$ (i.e., an element of the Brauer algebra which is not a permutation). Then, it follows that 
\begin{equation}
F_d(\sigma) \dya{\psi_0}^{\otimes t}=\dya{\psi_0}^{\otimes t}F_d(\sigma) =0\,,
\end{equation}
as well as 
\begin{equation}\label{eq:fpo-SM}
 F_d(\sigma)\Pi_{{\rm sym}}^{(t)}=\Pi_{{\rm sym}}^{(t)}F_d(\sigma) =0\,.
\end{equation}
\end{lemma}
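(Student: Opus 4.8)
\textbf{Proof strategy for Lemma~\ref{lem:no-overlap}.}

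The plan is to work directly with the explicit formula for $F_d(\sigma)$ in Eq.~\eqref{eq:rep-b_t} and exploit the structural feature that distinguishes a genuine Brauer diagram from a permutation: if $\sigma\in \mathfrak{B}_t(-d)\backslash \mf{S}_t(-d)$ then $\sigma$ must contain at least one ``cup'' (an edge joining two points in the bottom row, i.e.\ two indices among $i_1,\dots,i_t$) and, correspondingly, at least one ``cap'' (an edge joining two points in the top row). Reading off Eq.~\eqref{eq:rep-b_t}, such a bottom-row pairing $\{\lambda_\gamma,\sigma(\lambda_\gamma)\}$ with both $\lambda_\gamma,\sigma(\lambda_\gamma)\le t$ contributes a factor of the form $\bra{\cdot}\big(\Omega\text{ acting on two of the bra copies}\big)$ — concretely, when one restricts $F_d(\sigma)$ to its action on the ket side it contains a factor $\bra{\phi}\Omega\ket{\phi}$ summed appropriately, where the two tensor legs tied by the cup are contracted against a single copy of $\ket{\psi_0}$ each. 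I would make precise that the bottom cup acts on $\dya{\psi_0}^{\otimes t}$ by producing the scalar $\bra{\psi_0}^{\otimes 2}$ (suitably permuted) contracted with $\Omega$ on those two copies, i.e.\ a factor proportional to $\bra{\psi_0}\Omega\ket{\psi_0}$ on the relevant pair of Hilbert-space factors.

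The key step is then the elementary observation that $\Omega$ is antisymmetric: since $\Omega^T=-\Omega$, for \emph{any} vector $\ket{\psi_0}$ one has $\bra{\psi_0}\Omega\ket{\psi_0}=\langle\psi_0|\,\Omega\,|\psi_0\rangle = -\langle\psi_0|\,\Omega\,|\psi_0\rangle=0$ (noting that $\bra{\psi_0}$ here is the transpose, not the conjugate transpose, which is exactly what appears in $F_d(\sigma)$ because the $\Omega$'s act on the bra indices $i_1,\dots,i_t$ via Kronecker deltas $\delta_{i_{\lambda_\gamma},i_{\sigma(\lambda_\gamma)}}$). Hence the presence of a single bottom-row cup forces $F_d(\sigma)\dya{\psi_0}^{\otimes t}=0$; the analogous argument with a top-row cap (using that $\sigma$ not being a permutation forces a cap whenever it forces a cup) gives $\dya{\psi_0}^{\otimes t}F_d(\sigma)=0$. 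I would carry this out by (i) picking one cup/cap pair of $\sigma$, (ii) factoring $F_d(\sigma)=A\,(\text{cup on copies }a,b)\,B$ as an operator identity, (iii) inserting $\dya{\psi_0}^{\otimes t}$ and collapsing the cup against the two copies of $\ket{\psi_0}$ to get a prefactor $\bra{\psi_0}_a\Omega_{ab}\ket{\psi_0}_b\propto \langle\psi_0|\Omega|\psi_0\rangle = 0$.

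For Eq.~\eqref{eq:fpo-SM}, I would use that $\Pi_{\rm sym}^{(t)}=\int d\psi_{\rm Haar}\,\dya{\psi}^{\otimes t}$ (the Haar average of $\dya{\psi}^{\otimes t}$, as recalled in the preliminaries via $\mathbb{E}_{\mathbb{U}}$), so that $F_d(\sigma)\Pi_{\rm sym}^{(t)}=\int d\psi_{\rm Haar}\,F_d(\sigma)\dya{\psi}^{\otimes t}=0$ term-by-term by the first part (applied with reference state $\ket{\psi}$), and symmetrically on the right; alternatively one notes $\Pi_{\rm sym}^{(t)}=\frac{1}{t!}\sum_{\pi\in S_t}P_d(\pi)$ is a convex/linear combination of permutation images, and $F_d(\sigma)P_d(\pi)=F_d(\sigma\cdot\pi)$ with $\sigma\cdot\pi$ still a non-permutation Brauer element, then project onto $\Pi_{\rm sym}$ again — but the Haar-average route is cleanest and avoids case analysis. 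The main obstacle I anticipate is purely bookkeeping: writing Eq.~\eqref{eq:rep-b_t} in a clean enough operator form to \emph{cleanly} isolate one cup as a factor $\Omega$ sandwiched between two copies of the reference bra, keeping track of which Hilbert-space copies the $\Omega$'s and the $h(\cdot,\cdot)$ exponents act on, and confirming that the transpose (rather than dagger) convention is what makes $\langle\psi_0|\Omega|\psi_0\rangle$ the relevant scalar so that antisymmetry of $\Omega$ kills it for arbitrary $\ket{\psi_0}$; once that is set up, the vanishing is immediate.
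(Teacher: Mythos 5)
Your proposal is correct: the presence of a bottom-row ``cup'' (resp.\ top-row ``cap''), which any $\sigma\in \mathfrak{B}_t(-d)\backslash \mf{S}_t(-d)$ must contain, contracts against two copies of the reference state to produce the bilinear scalar $\psi_0^{T}\Omega\,\psi_0$ (transpose, not dagger), which vanishes because $\Omega^T=-\Omega$; this gives $F_d(\sigma)\dya{\psi_0}^{\otimes t}=\dya{\psi_0}^{\otimes t}F_d(\sigma)=0$, and averaging the first identity over Haar-random $\ket{\psi}$, using $\Pi_{{\rm sym}}^{(t)}=\mathbb{E}_{\mathbb{U}}[\dya{\psi}^{\otimes t}]$ as recalled in the preliminaries, yields Eq.~\eqref{eq:fpo-SM}. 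The paper exploits the same antisymmetry of $\Omega$, but packages it differently: rather than collapsing the cup into a scalar, it shows at the operator level that right- (or left-) multiplying $F_d(\sigma)$ by the transposition $P_d(\pi(\sigma))$ exchanging the two legs of a loop produces a minus sign (via the transpose trick applied to $\sum_j(\id\otimes\Omega)\ket{jj}$ and a $\SWAP$), and then uses the invariance $P_d(\pi)\dya{\psi_0}^{\otimes t}=\dya{\psi_0}^{\otimes t}$ together with the invariance of $\sum_{\pi\in S_t}P_d(\pi)$ under multiplication by any $P_d(\pi')$ to conclude that each product equals its own negative, hence zero. Your direct-contraction computation is arguably more elementary and makes the role of the symplectic form completely explicit, while the paper's sign-flip argument handles both the pure-state case and the symmetric-projector case by one uniform operator identity without invoking the Haar integral; both are sound, and your anticipated bookkeeping (isolating one cup as a tensor factor and checking the transpose convention) is exactly the only technical point that needs care.
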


Before proceeding to the proof, let us recall from the main text that all the elements $\sigma\in\mathfrak{B}_t(\delta)$ can  be completely specified by $t$ disjoint pairs, as 
\begin{equation}\label{eq:sigma-decomp-SM}
    \sigma=\{\{\lambda_1, \sigma(\lambda_1)\}\cup\dots\cup\{\lambda_t, \sigma(\lambda_t)\}\}\,.
\end{equation}
Then, the representation $F_d$ is defined for an element $\sigma\in \mathfrak{B}_t(-d)$ as 
\begin{align} 
F_d(\sigma) = \sum_{i_1,\dots,i_{2t}=1}^{d}\prod_{\gamma=1}^{t}  \Omega_{{\sigma(\lambda_\gamma)}}^{h(\lambda_\gamma,\sigma(\lambda_\gamma))}\ket{i_{t+1},i_{t+2},\dots,i_{2t}} \times\bra{i_1,i_2,\dots,i_t} \,\Omega_{{\sigma(\lambda_\gamma)}}^{h(\lambda_\gamma,\sigma(\lambda_\gamma))}\delta_{i_{\lambda_\gamma}, i_{\sigma(\lambda_\gamma)}} \,,\label{eq:rep-b_t-SM}
\end{align} 
where $h(\lambda_\gamma,\sigma(\lambda_\gamma))=1$ if $\lambda_\gamma,\sigma(\lambda_\gamma)\leq n$ or if $\lambda_\gamma,\sigma(\lambda_\gamma)> n$ and zero otherwise, and where $\Omega_{\sigma(\lambda_\gamma)}$ indicates that the $\Omega$ matrix of Eq.~\eqref{eq:omega} acts on the  $\sigma(\lambda_\gamma)$-th copy of the Hilbert space. We refer the reader to Fig.~\ref{fig:Brauer} where we schematically represent elements of $\mathfrak{B}_t(-d)$ for $t=1,2,3$.

\begin{figure*}
    \centering
    \includegraphics[width=1\linewidth]{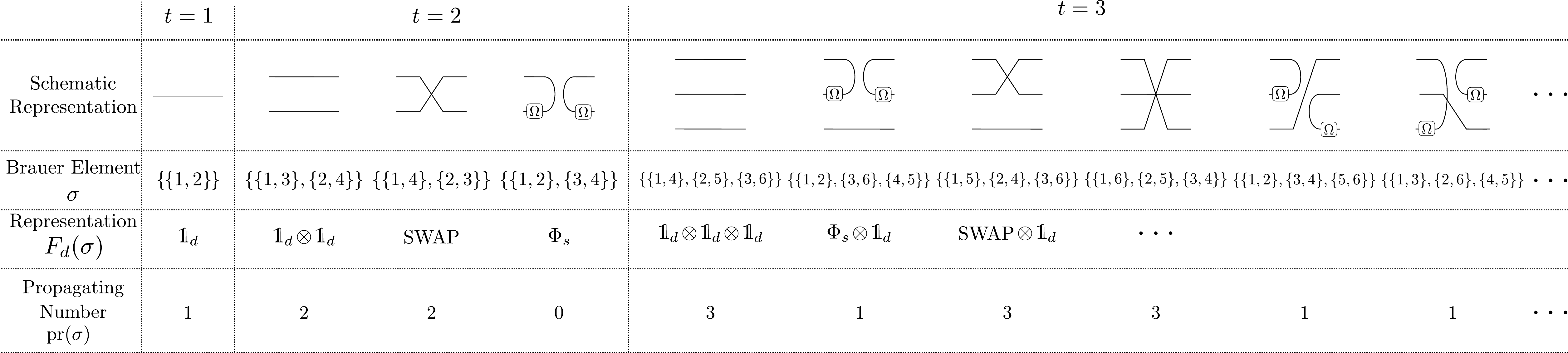}
    \caption{\textbf{Brauer algebra.} We show the schematic representation  of all elements of the Brauer algebra $\mathfrak{B}_t(-d)$ for $t=1,2$ and for some elements of $t=3$. We defined $\id_d=\sum_{i,j=1}^d\ket{ij}\bra{ij}$ as the $d\times d$ identity matrix, $\SWAP=\sum_{i,j=1}^d\ket{ij}\bra{ji}$ as the SWAP operator, and  $\Phi_s=\sum_{i,j=1}^d(\id_d\otimes \Omega)\ket{jj}\bra{ii}(\id_d\otimes \Omega)$. Here we also report the propagating number ${\rm pr}$ for each depicted element, were we recall that the propagating number is defined as the number of ``legs'' that cross from left to right~\cite{rubey2015combinatorics} and which take values in $\{t,t-2,t-4,\ldots\}$.}
    \label{fig:Brauer}
\end{figure*}

\begin{proof}
We start by showing that the representation of any element of the Brauer which  is not a permutation in the symmetric group, is antisymmetric with respect to at least one exchange of indexes. As such, take some $\sigma\in \mathfrak{B}_t(-d)\backslash \mf{S}_t(-d)$. That is, an element of the Brauer algebra that is not a permutation. By definition, it must follow that there exists at least one pair of indexes $\{\lambda_\eta,\sigma(\lambda_\eta)\}$ for which $\lambda_\gamma,\sigma(\lambda_\gamma)\leq n$. In diagrammatic tensor notation this means that there must be at least one ``loop'' on the left (see for instance third elements in the $t=1$ panel of Fig~\ref{fig:Brauer}, or the fifth element depicted for $t=2$). For instance, if we assume that the first pair in $\sigma$ as in Eq.~\eqref{eq:sigma-decomp} is $\{t+1,t+2\}$, then we can factorize $F_d(\sigma)$ as 
\begin{align} 
F_d(\sigma) =& \left(\sum_{j=1}^d\id_d\otimes \Omega|jj\rangle\right) \sum_{i_3,\dots}\prod_{\gamma=2}^{t}  \Omega_{{\sigma(\lambda_\gamma)}}^{h(\lambda_\gamma,\sigma(\lambda_\gamma))}\ket{i_{t+3},\dots,i_{2t}} \bra{i_1,i_2,\dots,i_t} \,\Omega_{{\sigma(\lambda_\gamma)}}^{h(\lambda_\gamma,\sigma(\lambda_\gamma))}\delta_{i_{\lambda_\gamma}, i_{\sigma(\lambda_\gamma)}} \,,
\end{align} 
which reveals that there is a loop on the first two indexes of the left (see Fig.~\ref{fig:Brauer}).

Then, consider the following chain of equalities:
\begin{align}
\SWAP \left(\sum_{j=1}^d\id_d\otimes \Omega|jj\rangle\right)&=\sum_{j=1}^d \Omega\otimes \id_d|jj\rangle
=\sum_{j=1}^d \id_d\otimes \Omega^T|jj\rangle=-\sum_{j=1}^d\id_d\otimes \Omega|jj\rangle\,,\label{eq:antisymmetric}
\end{align}
where we have defined $\SWAP=\sum_{i_1i_2}\ket{i_2i_1}\bra{i_1i_2}$. In the second line above, we have used the fact that the transpose trick holds, i.e., that for any $A,B\in\BC(\HC)$ then 
\begin{equation}
\sum_{j=1}^d A\otimes B|jj\rangle=\sum_{j=1}^d AB^T\otimes \id_d|jj\rangle=\sum_{j=1}^d\id_d\otimes BA^T|jj\rangle\,.\nonumber
\end{equation}
 Finally, in the last equality of Eq.~\eqref{eq:antisymmetric} we have simply used that $\Omega^T=-\Omega$ as detailed in Eq.~\eqref{eq:props-Omega}. As such, we have proven that the left- or right- multiplication of $F_d(\sigma)$ by any permutation that transposes the indexes of a loop leads to a minus sign.

Then, note that
\begin{equation}
\dya{\psi_0}^{\otimes t}=P_d(\pi)\dya{\psi_0}^{\otimes t}=\dya{\psi_0}^{\otimes t}P_d(\pi)\,,
\end{equation}
for any $\pi\in S_t$. From the previous, given any $\sigma$ in the Brauer algebra that is not a permutation, then we can take $\pi(\sigma)$ to be an element of $S_t$ which transposes the indexes of a loop in $\sigma$. Hence, we have 
\begin{align}
F_d(\sigma) \dya{\psi_0}^{\otimes t}&=F_d(\sigma) P_d(\pi(\sigma))\dya{\psi_0}^{\otimes t}=-F_d(\sigma) \dya{\psi_0}^{\otimes t}\,,
\end{align}
which proves that $F_d(\sigma) \dya{\psi_0}^{\otimes t}=0$. Similarly, we can start from $ \dya{\psi_0}^{\otimes t}F_d(\sigma)$  and take $\pi(\sigma)$ a permutation which transposes the indexes of a loop in $\sigma$ to the left, which leads to $ \dya{\psi_0}^{\otimes t}F_d(\sigma)=0$, establishing the first claim. 

Next, let us note that for any $\pi'\in S_t$
\begin{equation}\label{eq:inv-St}
    \sum_{\pi\in S_t} P_d(\pi)=P_d(\pi')\left(\sum_{\pi\in S_t} P_d(\pi)\right)=\left(\sum_{\pi\in S_t} P_d(\pi)\right)P_d(\pi')\,,
\end{equation}
which follows from the fact that $S_t$ forms a group and that $P_d$ is a group homomorphism, so that
\begin{equation}
    \left(\sum_{\pi\in S_t} P_d(\pi)\right)P_d(\pi')=\sum_{\pi\in S_t} P_d(\pi\cdot \pi')= \sum_{\tilde{\pi}\in S_t} P_d(\tilde{\pi})
\end{equation}
where we redefined the index in the summation as $\pi=\tilde{\pi}\cdot (\pi')^{-1}$. A similar derivation can be used when multiplying by $P_d(\pi')$ on the left. Then, similarly to the previous case, we have that for any $\sigma\in \mathfrak{B}_t(-d)\backslash \mf{S}_t(-d)$
\begin{align}
F_d(\sigma)\left(\sum_{\tilde{\pi}\in S_t} P_d(\tilde{\pi})\right) =\sum_{\pi\in S_t} F_d(\sigma)P_d(\pi'(\sigma)\circ \pi )  
=\sum_{\pi\in S_t} F_d(\sigma)P_d(\pi'(\sigma))P_d(\pi) =-\sum_{\pi\in S_t}  F_d(\sigma)P_d(\pi)\,.\label{eq:suminv}
\end{align}
The previous implies that $F_d(\sigma)\left(\sum_{\pi\in S_t} P_d(\pi)\right)=0$. A similar derivation can be used to show that $\left(\sum_{\pi\in S_t} P_d(\pi)\right)F_d(\sigma)=0$.
\end{proof}

\section{Proof of Theorem~\ref{theo:main}}

We begin by recalling Theorem~\ref{theo:main}.

\begin{theorem}\label{theo:main-SM}
Let $\ket{\psi_0}\in\HC$, then
\begin{align}\label{eq:twirl-brauer}
\mathbb{E}_{\mathbb{SP}}[ U^{\otimes t}\dya{\psi_0}^{\otimes t} (U\ad)^{\otimes t}]&=\frac{\sum_{\pi \in S_t}P_d(\pi)}{d(d+1)\cdots(d+t-1)}\nonumber\\
&=\mathbb{E}_{\mathbb{U}}[ U^{\otimes t}\dya{\psi_0}^{\otimes t} (U\ad)^{\otimes t}]\,.
\end{align}
\end{theorem}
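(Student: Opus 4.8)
The plan is to combine Lemma~\ref{lem:no-overlap} with two elementary structural facts about the operator $\rho_{\mathbb{SP}}:=\mathbb{E}_{\mathbb{SP}}[U^{\otimes t}\dya{\psi_0}^{\otimes t}(U\ad)^{\otimes t}]$, in such a way that no entry of the Weingarten/Gram matrix ever needs to be computed. First, since the symplectic twirl $\mathbb{E}_{\mathbb{SP}}[U^{\otimes t}(\cdot)(U\ad)^{\otimes t}]$ is a projection of $\BC(\HC^{\otimes t})$ onto $\CC^{(t)}(\mathbb{SP}(d/2))=\Im(F_d)$, the operator $\rho_{\mathbb{SP}}$ lies in the image of $F_d$, so I may write $\rho_{\mathbb{SP}}=\sum_{\sigma\in M_t}a_\sigma F_d(\sigma)$ for some scalars $a_\sigma$. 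Second, for every fixed $U$ the integrand is the rank-one projector onto $(U\ket{\psi_0})^{\otimes t}$, a tensor power of a single vector, which lies in the symmetric subspace of $\HC^{\otimes t}$; averaging over $U$ preserves this, so $\hat\Pi\,\rho_{\mathbb{SP}}\,\hat\Pi=\rho_{\mathbb{SP}}$, where $\hat\Pi=\tfrac1{t!}\sum_{\pi\in S_t}P_d(\pi)$ is the idempotent projector onto the symmetric subspace and $\Pi_{{\rm sym}}^{(t)}=\hat\Pi/\Tr[\hat\Pi]$.

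Next I would compress the $F_d$-expansion of $\rho_{\mathbb{SP}}$ by sandwiching it between two copies of $\hat\Pi$. By Eq.~\eqref{eq:fpo} of Lemma~\ref{lem:no-overlap} --- which applies verbatim with $\hat\Pi$ in place of $\Pi_{{\rm sym}}^{(t)}$, the two differing only by a positive scalar --- every term whose pairing $\sigma\in M_t$ is not a permutation is annihilated, $\hat\Pi F_d(\sigma)\hat\Pi=0$, while for the surviving terms, indexed by genuine permutations $\pi\in S_t\subset M_t$, one has $F_d(\pi)=P_d(\pi)$, which acts as the identity on the symmetric subspace, giving $\hat\Pi F_d(\pi)\hat\Pi=\hat\Pi$. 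Therefore
\[
\rho_{\mathbb{SP}}=\hat\Pi\,\rho_{\mathbb{SP}}\,\hat\Pi=\Big(\sum_{\pi\in S_t}a_\pi\Big)\hat\Pi\,,
\]
so $\rho_{\mathbb{SP}}$ is a scalar multiple of $\hat\Pi$. The scalar is fixed by the trace: $\Tr[\rho_{\mathbb{SP}}]=\Tr[\dya{\psi_0}^{\otimes t}]=1$, hence $\rho_{\mathbb{SP}}=\hat\Pi/\Tr[\hat\Pi]=\Pi_{{\rm sym}}^{(t)}=\frac{\sum_{\pi\in S_t}P_d(\pi)}{d(d+1)\cdots(d+t-1)}$, which by the Schur--Weyl computation recalled above equals $\mathbb{E}_{\mathbb{U}}[U^{\otimes t}\dya{\psi_0}^{\otimes t}(U\ad)^{\otimes t}]$, as claimed. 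The argument is uniform in $t$ and in the (even) dimension $d$.

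At this level I expect no real obstacle: once Lemma~\ref{lem:no-overlap} is in hand, everything above is bookkeeping, so the genuine difficulty has already been discharged in proving the Lemma --- namely the observation that every non-permutation Brauer diagram contains a loop and that transposing the two legs of that loop introduces a sign (via $\Omega^{T}=-\Omega$), which forces $F_d(\sigma)$ to be killed on both sides by $\dya{\psi_0}^{\otimes t}$ and by $\Pi_{{\rm sym}}^{(t)}$. As a consistency check one may also bypass the Weingarten/Brauer machinery entirely: by universality of $\mathbb{SP}(d/2)$ the orbit of $\ket{\psi_0}$ is the full projective space of $\HC$, so $\rho_{\mathbb{SP}}=\int \dya{\psi}^{\otimes t}\,d\nu(\psi)$ with $\nu$ the pushforward of the symplectic Haar measure; $\nu$ is $\mathbb{SP}(d/2)$-invariant, and since $\mathbb{SP}(d/2)$ acts transitively on projective space it is the \emph{unique} invariant probability measure, which therefore coincides with the (also $\mathbb{SP}(d/2)$-invariant) Fubini--Study measure $d\psi_{{\rm Haar}}$ --- once more yielding $\rho_{\mathbb{SP}}=\int d\psi_{{\rm Haar}}\,\dya{\psi}^{\otimes t}=\Pi_{{\rm sym}}^{(t)}$.
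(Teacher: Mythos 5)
Your proposal is correct, and its core argument is essentially the second derivation the paper itself gives at the end of its proof of Theorem~\ref{theo:main}: expand $\rho_{\mathbb{SP}}$ in the commutant basis $\{F_d(\sigma)\}$, use Lemma~\ref{lem:no-overlap} (Eq.~\eqref{eq:fpo}) to annihilate every non-permutation Brauer element, use $P_d(\pi)\hat\Pi=\hat\Pi$ to collapse the permutation part onto the symmetric projector, and fix the overall scalar by the trace. Two remarks on how it compares. First, the paper's \emph{primary} argument is different in flavor: it sets up the Weingarten linear system $W\vec{c}=\vec{b}$, shows via Lemma~\ref{lem:no-overlap} that $\vec{b}=(\vec{1}_S,\vec{0})$, and then proves that this vector is a fixed point of $W$ (hence of the pseudo-inverse $W^{-1}$); your sandwich $\hat\Pi\,\rho_{\mathbb{SP}}\,\hat\Pi=\rho_{\mathbb{SP}}$ bypasses the Gram/Weingarten matrix entirely, and your use of the idempotent $\hat\Pi$ rather than the trace-normalized $\Pi_{\rm sym}^{(t)}$ is actually cleaner than the paper's second derivation, which applies the normalized projector as if it were idempotent (harmless up to the scalar it later fixes by the trace, but your bookkeeping is tidier). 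Second, your ``consistency check'' is a genuinely different and fully rigorous proof in its own right, not appearing in the paper: since $\mathbb{SP}(d/2)$ acts transitively on pure states and a compact group acting transitively on a compact homogeneous space admits a unique invariant Borel probability measure, the pushforward of symplectic Haar must coincide with the Fubini--Study measure, which yields the \emph{distributional} identity directly (and hence all moments at once) without any Brauer-algebra or Weingarten machinery; its only cost is invoking the transitivity result of the cited references, which the paper also relies on elsewhere.
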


And we also recall the definition of the normalized projector onto the symmetric subspace of $\HC^{\otimes t}$
\begin{equation}
   \Pi_{{\rm sym}}^{(t)}= \frac{\sum_{\pi \in S_t}P_d(\pi)}{d(d+1)\cdots(d+t-1)}\,.
\end{equation}

\begin{proof}
    Let us now recall from  Appendix~\ref{ap:weingarten} the main key step  behind the Weingarten calculus. In particular for the case of averages over $\mathbb{SP}(d/2)$ we  have
\begin{equation}
    \mathbb{E}_{G}[ U^{\otimes t}(X) (U\ad)^{\otimes t}]=\sum_{\sigma\in\mathfrak{B}_t(-d)} c_\sigma(X) F_d(\sigma)\,.
\end{equation}
Then, for the special case of $X=\dya{\psi_0}^{\otimes t}$, when building the linear system of equations to solve, we would find from Lemma~\ref{lem:no-overlap-SM} that all the entries in the $\vec{b}(\dya{\psi_0}^{\otimes t})$ vector corresponding to elements of the Brauer algebra that are not permutations are zero. Hence, re-ordering the basis of $\mathfrak{B}_t(-d)$  so that the first $t!$ elements are permutations and leveraging Lemma~\ref{lem:no-overlap-SM} along with the fact that $\Tr[\dya{\psi_0}^{\otimes t}P_{d}(\pi)]=1$ for all $\pi\in S_t$, we find that  
\begin{equation}\label{eq:c-block}
    \vec{b}(\dya{\psi_0}^{\otimes t})=(\vec{1}_{S},\vec{0})\,,\quad \text{and we write} \quad \quad \vec{c}(\dya{\psi_0}^{\otimes t})=(\vec{c}_{S}(\dya{\psi_0}^{\otimes t}),\vec{c}_{B/S}(\dya{\psi_0}^{\otimes t}))\,.
\end{equation}
Above, $\vec{1}_{S}$ is a vector of all ones of length $t!$, and  we also divided $\vec{c}(\dya{\psi_0}^{\otimes t})$ into the coefficients corresponding to permutations and to elements of the Brauer algebra which are not permutations.

Then, let us note that for any $\pi'\in S_t$, we can use Eq.~\eqref{eq:inv-St} to show that 
\begin{equation}
    \Tr[\Pi_{{\rm sym}}^{(t)} P_d(\pi') ]=1\,.
\end{equation}
Moreover, since by definition
\begin{equation}
    \mathbb{E}_{\mathbb{SP}}\left[ U^{\otimes t}\Pi_{{\rm sym}}^{(t)}(U\ad)^{\otimes t}\right]=\Pi_{{\rm sym}}^{(t)}\,,
\end{equation}
we find that  $\vec{c}\left(\Pi_{{\rm sym}}^{(t)}\right)=(\vec{1}_{S},\vec{0})$ and $\vec{b}\left(\Pi_{{\rm sym}}^{(t)}\right)=(\vec{1}_{S},\vec{0})$, which implies
\begin{equation}
    W\cdot (\vec{1}_{S},\vec{0})=(\vec{1}_{S},\vec{0})\,.
    \label{eq:eig1}
\end{equation} 
This means that $(\vec{1}_{S}, \vec{0})$ is an eigenvector of $W$ with eigenvalue 1.
Recall that $W$ is Hermitian, being a Gram matrix, and therefore admits an orthonormal eigendecomposition:
\begin{equation}
W = \sum_j a_j \ketbra{v_j}{v_j}\,,
\end{equation}
where $\{a_j\}_j$ are the eigenvalues, and $\{\ket{v_j}\}_j$ are the corresponding eigenvectors. While $W$ is not necessarily invertible, we can use its pseudoinverse, given by
\begin{equation}
W^{-1} = \sum_{j : a_j \neq 0} a_j^{-1} \ketbra{v_j}{v_j}\,.
\end{equation}
Multiplying Eq.~\eqref{eq:eig1} by $W^{-1}$ on both sides yields
\begin{equation}
    (\vec{1}_{S}, \vec{0}) = W^{-1} (\vec{1}_{S}, \vec{0})\,.
\end{equation}
Thus, we conclude that 
\begin{align}
    \vec{c}(\dya{\psi_0}^{\otimes t}) =W^{-1}\vec{b}(\dya{\psi_0}^{\otimes t}) = W^{-1}(\vec{1}_{S},\vec{0}) = (\vec{1}_{S}, \vec{0}),
\end{align}
and hence
\begin{equation}
    \mathbb{E}_{\mathbb{SP}}[ U^{\otimes t} \dya{\psi_0}^{\otimes t} (U\ad)^{\otimes t}] = \Pi_{{\rm sym}}^{(t)} = \frac{\sum_{\pi \in S_t} P_d(\pi)}{d(d+1)\cdots(d+t-1)}\,.
\end{equation}

Here we note that we can also recover this result with the following proof strategy. Given that  $\mathbb{E}_{\mathbb{SP}}[ U^{\otimes t}(\cdot) U^{\dagger \otimes t}]$ projects into its commutant, we know that there exist coefficients $\{c_\pi\}_\pi \cup \{c_\sigma\}_\sigma$ such that 
\begin{equation}
\mathbb{E}_{\mathbb{SP}}[ U^{\otimes t}\dya{\psi_0}^{\otimes t} (U\ad)^{\otimes t}]= \sum_{\pi\in S_t}c_{\pi}P_d(\pi) +  \sum_{\sigma\in \mathfrak{B}_t(-d)\backslash \mf{S}_t(-d)}c_{\sigma}F_d(\sigma)\,.
\end{equation}
By Lemma~\ref{lem:no-overlap-SM} we further know that each $F_d(\sigma)$ is annihilated by left- or right-multiplication by $\Pi_{{\rm sym}}^{(t)}$. Hence,  acting on both sides of the above equation with $\Pi_{{\rm sym}}^{(t)}$ leads to 
\begin{align*}
0&=\Pi_{{\rm sym}}^{(t)} \sum_{\sigma\in \mathfrak{B}_t(-d)\backslash \mf{S}_t(-d)}c_{\sigma}F_d(\sigma)\\
&=\Pi_{{\rm sym}}^{(t)} \left( \mathbb{E}_{\mathbb{SP}}[ U^{\otimes t}\dya{\psi_0}^{\otimes t} (U\ad)^{\otimes t}]- \sum_{\pi\in S_t}c_\pi P_d(\pi) \right)\\
&=\Pi_{{\rm sym}}^{(t)} \left( \mathbb{E}_{\mathbb{SP}}[ U^{\otimes t}\dya{\psi_0}^{\otimes t} (U\ad)^{\otimes t}]\right)- \Pi_{{\rm sym}}^{(t)} \left( \sum_{\pi\in S_t}c_\pi P_d(\pi) \right)\\
&= \mathbb{E}_{\mathbb{SP}}[ U^{\otimes t}\dya{\psi_0}^{\otimes t} (U\ad)^{\otimes t}]- \left(\sum_{\pi\in S_t}c_\pi \right) \Pi_{{\rm sym}}^{(t)}\,.
\end{align*}
Here we have used that $ \Pi_{{\rm sym}}^{(t)}P_d(\pi)= \Pi_{{\rm sym}}^{(t)}$ for all $\pi\in S_t$ as per Eq.~\eqref{eq:inv-St}.
So, we conclude
\begin{equation}
\mathbb{E}_{\mathbb{SP}}[ U^{\otimes t}\dya{\psi_0}^{\otimes t} (U\ad)^{\otimes t}]= \left(\sum_{\pi\in S_t}c_\pi \right)\Pi_{{\rm sym}}^{(t)};
\end{equation}
taking the trace of both sides yields
\begin{equation}
\sum_{\pi\in S_t}c_\pi =\left(\Tr[\Pi_{{\rm sym}}^{(t)}]\right)^{-1}=1\,,
\end{equation}
which again recovers the result in Theorem~\ref{theo:main-SM}.

\end{proof}

\section{Proof of Theorem~\ref{theo:rank-two}}

Let us begin by recalling  Theorem~\ref{theo:rank-two}.

\begin{theorem}\label{theo:rank-two-SM}
    There exists a rank-two quantum state $\rho$ for which
    \begin{equation}
        \mathbb{E}_{\mathbb{U}}[ U^{\otimes 2}\rho^{\otimes 2}(U\ad)^{\otimes 2}]\neq \mathbb{E}_{\mathbb{SP}}[ U^{\otimes 2}\rho^{\otimes 2}(U\ad)^{\otimes 2}]\,.
    \end{equation}
\end{theorem}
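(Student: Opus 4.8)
The plan is to work at the second moment, $t=2$, and compare the two twirls directly via the Weingarten calculus of Appendix~\ref{ap:weingarten}. The key structural fact is that the two relevant commutants differ by exactly one dimension: $\CC^{(2)}(\mathbb{U}(d))=\spn\{\id,\SWAP\}$, whereas $\CC^{(2)}(\mathbb{SP}(d/2))=\spn\{\id,\SWAP,\Phi_s\}$, where $\Phi_s=F_d(\sigma)$ is the image of the unique non-permutation diagram in $\mathfrak{B}_2(-d)$ (the ``cup--cap''; see Fig.~\ref{fig:Brauer}), namely $\Phi_s=\sum_{i,j}(\id\otimes\Omega)\ket{jj}\!\bra{ii}(\id\otimes\Omega)$. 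First I would record that these three operators are linearly independent precisely when $d\geq 4$: one checks $\Phi_s=-\ket{v}\!\bra{v}$ with $\ket{v}=\sum_i\ket{i}\otimes\Omega\ket{i}$, a rank-one operator supported on the $\tfrac{d(d-1)}{2}$-dimensional antisymmetric subspace, so it can lie in $\spn\{\id,\SWAP\}$ only if $\tfrac{d(d-1)}{2}=1$, i.e.\ $d=2$ (consistent with $\mathbb{SP}(1)\cong\SU(2)$, whose commutant coincides with that of $\mathbb{U}(2)$ at every order). Hence the counterexample must be sought for $d\geq 4$, and I fix such a $d$.

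Next I would set up the comparison. Since each twirl $\mathbb{E}_G[U\ts(\cdot)(U\ad)\ts]$ is the orthogonal Hilbert--Schmidt projection onto $\CC^{(2)}(G)$, the two twirls agree on an operator $X$ iff $\mathbb{E}_{\mathbb{SP}}[U\ts X(U\ad)\ts]$ has vanishing $\Phi_s$-component, equivalently iff $\Tr[\widetilde\Phi\,X]=0$, where $\widetilde\Phi:=\Phi_s-\alpha\id-\beta\SWAP$ is the part of $\Phi_s$ orthogonal to $\spn\{\id,\SWAP\}$. The scalars $\alpha,\beta$ follow from the $2\times2$ unitary Gram matrix together with the short computations $\Tr[\Phi_s]=-d$ and $\Tr[\Phi_s\SWAP]=d$ (which use $\Omega^2=-\id_d$ and $\Omega\Omega\trp=\id_d$), giving $\alpha\id+\beta\SWAP=\tfrac{1}{d-1}(\SWAP-\id)$. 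Equivalently --- and this is the route I would actually write --- assuming the twirls agree and calling the common value $T$, the fact that $[\Phi_s,U\ts]=0$ for $U\in\mathbb{SP}(d/2)$ gives $\Tr[\Phi_s\rho\ts]=\Tr[\Phi_s\,T]=\Tr[\Phi_s\,\mathbb{E}_{\mathbb{U}}[U\ts\rho\ts(U\ad)\ts]]$, whose right-hand side is an explicit function of $d$ obtained from the standard unitary Weingarten coefficients (it equals $\tfrac{-1}{2(d-1)}$ for our $\rho$ below).

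It then remains to choose $\rho$ so that $\Tr[\Phi_s\rho\ts]$ takes a value incompatible with that function --- in particular a nonzero value, which a \emph{pure} state cannot achieve by Lemma~\ref{lem:no-overlap}. The natural choice is $\rho=\tfrac12\big(\dya{\psi_0}+\dya{\psi_1}\big)$ with $\ket{\psi_0},\ket{\psi_1}$ real, orthonormal, and $\matl{\psi_0}{\Omega}{\psi_1}\neq0$; concretely $\ket{\psi_0}=\ket{1}$, $\ket{\psi_1}=\ket{1+d/2}$, an $\Omega$-partner pair with $\matl{\psi_0}{\Omega}{\psi_1}=1$. Expanding $\rho\ts=\tfrac14\sum_{a,b\in\{0,1\}}\dya{\psi_a}\otimes\dya{\psi_b}$ and using the elementary identity $\Tr[\Phi_s(\dya{\phi}\otimes\dya{\chi})]=-\big(\matl{\phi}{\Omega}{\chi}\big)^2$, valid for real $\ket{\phi},\ket{\chi}$, the diagonal terms $a=b$ vanish because $\matl{\psi_a}{\Omega}{\psi_a}=0$ by antisymmetry of $\Omega$ (this is exactly the mechanism behind Lemma~\ref{lem:no-overlap}, and is why rank at least two is needed), so $\Tr[\Phi_s\rho\ts]=-\tfrac12$. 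Together with $\Tr[\id\,\rho\ts]=1$ and $\Tr[\SWAP\,\rho\ts]=\Tr[\rho^2]=\tfrac12$ this yields $\Tr[\widetilde\Phi\,\rho\ts]=-\tfrac12+\tfrac{1}{2(d-1)}=\tfrac{2-d}{2(d-1)}\neq0$ for all $d\geq4$; equivalently, equality of the two twirls would force $-\tfrac12=\tfrac{-1}{2(d-1)}$, i.e.\ $d=2$, a contradiction. Hence $\mathbb{E}_{\mathbb{SP}}[U\ts\rho\ts(U\ad)\ts]$ has a genuine $\Phi_s$-component while $\mathbb{E}_{\mathbb{U}}[U\ts\rho\ts(U\ad)\ts]\in\spn\{\id,\SWAP\}$ does not, so the two differ and Theorem~\ref{theo:rank-two} follows.

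I do not expect a genuine obstacle; the computation is routine. The points needing care are: (i) the counterexample really requires $d\geq4$, since for $d=2$ the symplectic and unitary commutants coincide at all orders and no counterexample of any rank exists; (ii) correctly locating $\widetilde\Phi$ inside the enlarged commutant, i.e.\ computing $\Tr[\Phi_s]$ and $\Tr[\Phi_s\SWAP]$; and (iii) tracking transposes and conjugations so that the sign in $\Tr[\Phi_s(\dya{\phi}\otimes\dya{\chi})]=-\big(\matl{\phi}{\Omega}{\chi}\big)^2$, and hence in $\Tr[\Phi_s\rho\ts]$, comes out right --- although conceptually the only content is that $\ket\psi\mapsto\matl{\psi}{\Omega}{\psi}$ vanishes identically on real states while the off-diagonal pairing $\matl{\psi_0}{\Omega}{\psi_1}$ need not.
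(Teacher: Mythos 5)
Your proof is correct, and it reaches the same destination as the paper's --- an explicit $t=2$ counterexample hinging on the extra commutant element $\Phi_s$ --- but by a noticeably different route. The paper simply inserts the state $\rho=\lambda_0\dya{0}^{\otimes n}+\lambda_1\dya{1}^{\otimes n}$ into the explicit second-moment Weingarten formulas for both $\mathbb{U}(d)$ and $\mathbb{SP}(d/2)$ (the latter taken from the cited symplectic-circuits reference), and observes that the resulting expansions differ whenever both eigenvalues are nonzero, the symplectic twirl acquiring a nonvanishing $\Phi_s$ component. You instead avoid the symplectic Weingarten coefficients entirely: you use $\Phi_s$ as an invariant test observable, exploiting $[\Phi_s,U\ts]=0$ to get $\Tr[\Phi_s\,\mathbb{E}_{\mathbb{SP}}[U\ts\rho\ts (U\ad)\ts]]=\Tr[\Phi_s\rho\ts]$, and comparing this with the value forced by the standard unitary twirl formula; your numbers ($\Tr[\Phi_s]=-d$, $\Tr[\Phi_s\SWAP]=d$, $\Tr[\Phi_s\rho\ts]=-\tfrac12$ versus $-\tfrac{1}{2(d-1)}$) all check out. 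Two things your version buys: (i) you make explicit the $d\geq 4$ linear-independence caveat ($\Phi_s\notin\spn\{\id,\SWAP\}$), which the paper leaves implicit; and (ii) by choosing an $\Omega$-paired support ($\ket{1},\ket{1+d/2}$ under the canonical $\Omega$ of Eq.~\eqref{eq:omega}), you make the key overlap $\matl{\psi_0}{\Omega}{\psi_1}=1$ manifest, whereas the value $\Tr[\rho\ts\Phi_s]=-2\lambda_0\lambda_1$ underlying the paper's displayed coefficients depends on how the support of $\ket{0^n},\ket{1^n}$ sits relative to the chosen symplectic form (with the canonical $\Omega$ this overlap actually vanishes for $n\geq2$, though the paper's conclusion survives because the $\Phi_s$ coefficient $\propto 2\lambda_0\lambda_1$ remains nonzero); your explicit choice sidesteps any such convention dependence. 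One could even note that your test-functional argument shows more generally that the two twirls differ for any state whose value of $\Tr[\Phi_s\rho\ts]$ deviates from $(\Tr[\rho^2]-1)/(d-1)$, which is what the mechanism of Lemma~\ref{lem:no-overlap} rules out only in the rank-one case.
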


\begin{proof}

Here we will show that  $\mathbb{E}_{\mathbb{U}}[ U^{\otimes 2}\rho^{\otimes 2} (U\ad)^{\otimes 2}]$ is not equal to $ \mathbb{E}_{\mathbb{SP}}[ U^{\otimes 2}\rho^{\otimes 2} (U\ad)^{\otimes 2}]$ for rank-two states. Our proof strategy is based on explicitly picking some  state, computing these averages, and showing that they are different.

Consider the following rank-two state $\rho=\lambda_0\dya{0}^{\otimes n}+\lambda_1\dya{1}^{\otimes n}$ 
such that $\lambda_0+\lambda_1=1$. Then, we have that~\cite{mele2023introduction,garcia2023deep}
\begin{align}
    \mathbb{E}_{\mathbb{U}}[ U^{\otimes 2}\rho^{\otimes 2}(U\ad)^{\otimes 2}]=&\frac{1}{d^2-1}\left(\Tr[\rho^{\otimes 2}]-\frac{\Tr[\rho^{\otimes 2} \SWAP]}{d}\right)\id\otimes \id +\frac{1}{d^2-1}\left(\Tr[\rho^{\otimes 2} \SWAP]-\frac{\Tr[\rho^{\otimes 2}]}{d}\right) \SWAP\nonumber\\
    =&\frac{1}{d^2-1}\left(1-\frac{\lambda_1^2+\lambda_2^2}{d}\right)\id\otimes \id +\frac{1}{d^2-1}\left(\lambda_1^2+\lambda_2^2-\frac{1}{d}\right)\SWAP\,.\label{eq:twirl-rank-2-U}
\end{align}
 A direct calculation also reveals that~\cite{garcia2024architectures}
 \begin{align}  
    \mathbb{E}_{\mathbb{SP}}[ U^{\otimes 2}\rho^{\otimes 2}(U\ad)^{\otimes 2}]&=\frac{(d-1)\Tr[\rho^{\otimes 2}]  -\Tr[\rho^{\otimes 2}{\rm SWAP}]  +\Tr[\rho^{\otimes 2} \Phi_s]}{d(d+1)(d-2)}\,\id_d\otimes \id_d\nonumber\\
    &+\frac{-\Tr[\rho^{\otimes 2}] + (d-1)\Tr[\rho^{\otimes 2}{\rm SWAP}]  -\Tr[\rho^{\otimes 2} \Phi_s]}{d(d+1)(d-2)} \,{\rm SWAP}\nonumber\\
    &+\frac{\Tr[\rho^{\otimes 2}]  -\Tr[\rho^{\otimes 2}{\rm SWAP}] + (d-1) \Tr[\rho^{\otimes 2} \Phi_s]}{d(d+1)(d-2)}\,\Phi_s\nonumber\\
    &=\frac{1}{d(d+1)}\,\id_d\otimes \id_d+\frac{-1 + (d-1)(\lambda_1^2+\lambda_2^2)  +2\lambda_1\lambda_2}{d(d+1)(d-2)} \,{\rm SWAP}\nonumber\\
    &+\frac{1  -(\lambda_1^2+\lambda_2^2) -2 (d-1) \lambda_1\lambda_2}{d(d+1)(d-2)}\,\Phi_s\,.\label{eq:twirl-rank-2-SP}
\end{align}
where $\Phi_s=\id_d\otimes \Omega\dya{\Phi}\id_d\otimes \Omega$ and $\ket{\Phi}=\sum_{j=1}^d\ket{jj}$. Comparing the previous result with~\eqref{eq:twirl-rank-2-U} shows that if $\lambda_1,\lambda_2\neq 0$ then $ \mathbb{E}_{\mathbb{U}}[ U^{\otimes 2}\rho^{\otimes 2}(U\ad)^{\otimes 2}]\neq \mathbb{E}_{\mathbb{SP}}[ U^{\otimes 2}\rho^{\otimes 2}(U\ad)^{\otimes 2}]$, thus showing that for rank-two states these averages can be different.

\end{proof}

\section{On the representation theory of the Brauer algebra}\label{ap:rep}

In this section we present some insights regarding the representation $F_d$ of the Brauer algebra.  

Let us start  define the propagating number ${\rm pr }(\sigma)$  as the number of ``legs'' that cross from left to right~\cite{rubey2015combinatorics} (see Fig.~\ref{fig:Brauer}). Interestingly, the propagating number allows us to group the elements of the Brauer algebra according to the value of ${\rm pr }(\sigma)$. That is, we can define as $J_t(\eta)$ all the elements of $\mathfrak{B}_t(-d)$ that have propagating number smaller or equal than $\eta$. For instance,  we have that given some $\sigma\in\mathfrak{B}_t(-d)$ which is also a permutation (i.e., $\sigma\in S_t$), then ${\rm pr}(\sigma)=t$. As such, $J_t(n-1)$ correspond precisely to all the elements in $\mathfrak{B}_t(-d)\backslash \mf{S}_t(-d)$.
Importantly, one can readily verify that the following composition table holds
\begin{equation}\label{eq:compisition}
    \noindent\begin{tabular}{c | c c }
     $\cdot$ & $S_t$ & $J_t(n-1)$   \\
    \cline{1-3}
    $S_t$ & $S_t$ & $J_t(n-1)$  \\
    $J_t(n-1)$ & $J_t(n-1)$ & $J_t(n-1)$  
\end{tabular}\,.
\end{equation}

Next, let us consider the one-dimensional representation of the Brauer algebra $r_1:\mathfrak{B}_t(-d)\rightarrow \mathbb{R}$ defined as~\cite{rubey2015combinatorics}:
\begin{equation}
    r_1(\sigma)=\begin{cases}1 \,, \quad \text{if }\sigma \in S_t\\
    0 \,, \quad \text{if }\sigma \in J_t(n-1)
    \end{cases}\,,
\end{equation}
and we can readily check that it respects the composition table of Eq.~\eqref{eq:compisition}
\begin{equation}
    \noindent\begin{tabular}{c | c c }
     $\cdot$ & $r_1(S_t)$ & $r_1(J_t(n-1))$   \\
    \cline{1-3}
    $r_1(S_t)$ & 1 & 0  \\
    $r_1(J_t(n-1))$ & 0 & 0  
\end{tabular}\,.
\end{equation}

From here, we decompose the $t$-th fold tensor product of the Hilbert space as
\begin{equation}
    \HC^{\otimes t}=\HC^{\otimes t}_{{\rm sym}}\oplus \HC_{\perp}\,,
\end{equation}
where $\HC^{\otimes t}_{{\rm sym}}$ denote the symmetric subspace of $\HC^{\otimes t}$. I.e., for any $\ket{\Psi}\in \HC^{\otimes t}_{{\rm sym}}$, one has that $\Pi_{{\rm sym}}^{(t)}\ket{\Psi}=\ket{\Psi}$, and hence from Lemma~\ref{lem:no-overlap} we obtain
\begin{equation}
    F_d(\sigma)\ket{\Psi}=\begin{cases}1\ket{\Psi} \,, \quad \text{if }\sigma \in S_t\\
    0\ket{\Psi}=0 \,, \quad \text{if }\sigma \in J_t(n-1)
    \end{cases}\,.
\end{equation}
As such, this directly implies that the action of the representation $F_d$ induces a decomposition into irreducible representation of the symmetric subspace of the Hilbert space as
\begin{equation}
    \HC^{\otimes t}_{{\rm sym}}=\id_{d_{{\rm sym}}}\otimes r_1\,,
\end{equation}
where $\id_{d_{{\rm sym}}}$  is an identity of size $d_{{\rm sym}}=\dim(\HC^{\otimes t}_{{\rm sym}})=d(d+1)\cdots (d+t-1)$.

The previous irrep decomposition of the symmetric subspace of $\HC^{\otimes t}$ is fundamental for our results as the pure states that we  average over belong precisely to $\HC^{\otimes t}_{{\rm sym}}$ where the elements of $J_t(n-1)$ act as zero, and this eventually leads to them not contributing in the averages.

\section{Statistical indistinguishability between $\mathrm{U}(d)$ and $\mathbb{SP}(d/2)$ state distributions}\label{ap:indist}
 
The fact that, for any $t \in \mathbb{N}$, the uniform distribution over symplectic states forms a unitary $t$-design implies that no quantum experiment can distinguish between states sampled uniformly from $\mathrm{U}(d)$ and those sampled from $\mathbb{SP}(d/2)$, even if the sampled state is queried an arbitrarily large number of times.
In fact, by the definition of a $t$-design, the uniform probability distribution over $t$ copies of symplectic states is identical to the uniform distribution over $t$ copies of unitary states. Specifically, the associated density matrices
\begin{align}
\rho_{\mathrm{U}} \coloneqq \mathbb{E}_{\psi \sim \mathrm{U}(d)} \left[ (|\psi\rangle \langle \psi|)^{\otimes t} \right]\,, 
\quad \text{and} \quad 
\rho_{\mathbb{SP}} \coloneqq \mathbb{E}_{\psi \sim \mathbb{SP}(d/2)} \left[ (|\psi\rangle \langle \psi|)^{\otimes t} \right]\,,
\end{align}
are identical. As a result, the two distributions over states are indistinguishable, even when arbitrary POVMs are performed. Thus, we can conclude the following:

\indist*

\section{Random symplectic circuit in a one-dimensional lattice}\label{ap:brick}

Let us here consider a one-dimensional layered circuit $U$ where local two-qubit gates act on neighboring pairs of qubits in a
brick-like fashion. As such,  $U$ can be expressed as 
\begin{equation}\label{eq:circuit}
    U=\prod_{l=1}^L U_{l}\,,
\end{equation}
and where
\begin{equation}\label{eq:layer}
 U_{l}= \prod_{j = 1}^{\frac{n}{2}} U_{2j-1,2j}^l\prod_{j = 1}^{\frac{n-2}{2}} U_{2j,2j+1}^l\,.
\end{equation}

Then, let us assume that each $U_{i,j}$ is sampled according to the Haar measure over some local group $G_{i,j}\subseteq\mathbb{U}(4)$, and let us define as $\EC_L$ the distribution of unitaries obtained from an $L$-layered circuit $U$. In particular, it is known that in the large-$L$ limit, $\EC_L$ will converge to some global group $G\subseteq\mathbb{U}(2^n)$ which depends on the choice of the  $G_{i,j}$'s. Here, one can quantify how many layers are needed for $\EC_L$  to form  an approximate $t$-design over $G$ by studying the spectral gap $\lambda$ of the second-order moment operator of a single layer
\begin{align}
    \TC_{\EC_1}^{(2)}&= \underset{U \sim \EC_1}{\mathbb{E}}[U^{\otimes 2} \otimes (U^*)^{\otimes 2}] = \int_{\EC_1}dU U^{\otimes 2}\otimes (U^*)^{\otimes 2} \label{eq:tiwrlG}\,.
\end{align}
The spectral gap $\lambda$ is defined as the largest $<1$ eigenvalue of $\TC_{\EC_q}^{(2)}$~\cite{brandao2016local}. Specifically,  $\EC_1$ forms an $\varepsilon$-approximate $2$-design if $\lambda^L\leq\varepsilon/2^n$. That is, when
\begin{equation}
    L\sim\frac{\log(\frac{1}{\epsilon})+n\log(2)}{\log(\frac{1}{\lambda})}\,.
\end{equation}
Crucially, this means that if each local gate has $N$ parameters, the $L$-layered circuit will have a total number of parameters $N_T$ scaling as
\begin{equation}
    N_T\sim  \frac{N(n-1)\left(\log(\frac{1}{\epsilon})+n\log(2)\right)}{\log(\frac{1}{\lambda})}\,,
\end{equation}
where we have also used the fact each layer has $(n-1)$ gates as per Eq.~\eqref{eq:layer}.

In what follows we will consider two different cases. In the first, $G_{i,j}=\mathbb{SU}(4)$ for all $i,j$, in which case the circuit's distribution converges to $\mathbb{U}(2^n)$. Here, each local gate is parametrized by $15$ parameters.  In the second, we take  $G_{j,j'}^l=\mathbb{SO}(4)$ if $j,j'$ are not equal to one, and $G_{j,j'}^l=\mathbb{SP}(2)$ if  $j$ or $j'$ are equal to one.  In this case one can show that  $G=\mathbb{SP}(2^n/2)$~\cite{garcia2024architectures}. Now, the local orthogonal gates have $6$ parameters while the symplectic ones have $10$. Since most of the gates are orthogonal, we will assume that each gate has, on average, $6$ parameters, as this statement becomes true in the large $n$ limit. We refer to the previous two cases as the unitary and symplectic case, respectively. 

As shown in~\cite{deneris2024exact}, for the unitary case we find $\lambda=0.64$, while for the symplectic case $\lambda=0.6461$. A direct calculation reveals that the ratio of total number of parameters needed for the unitary and symplectic circuit to be an $\epsilon$-approximate $2$-design over their respective group --respectively denoted as $N_{\mathbb{U}}$ and $N_{\mathbb{SP}}$-- is
\begin{equation}
   \frac{N_{\mathbb{SP}}}{N_{\mathbb{U}}}= \frac{\log(\frac{1}{.64})}{\log(\frac{1}{.6461})}\frac{6}{15}\approx .40\,.
\end{equation}
  
From the previous, we find that the following result holds.
\setcounter{result}{1}
\begin{result}\label{res:params-SM}
A random circuit in a one-dimensional topology composed of symplectic random two-qubit gates can form an $\epsilon$-approximate  $2$-design with $~60\%$ less parameters than a circuit with the same topology and with unitary random two-qubit gates. 
\end{result}
\section{Classical shadows with symplectic unitaries} \label{ap:shad}
In the appendix we (briefly) review the formalism of \textit{classical shadows}~\cite{huang2020predicting}, establishing the details necessary to prove Result~\ref{result:shad}, that symplectic shadows simply reproduce the known unitary shadows scheme. Classical shadow tomography refers to a randomized measurement scheme for determining properties of an unknown quantum state $\rho$ from very few measurements. A specific classical shadows protocol is defined by a choice of unitary ensemble $\mcu$ and measurement basis $\mcw=\{\ket{w}\}_w$. One proceeds by applying unitaries sampled from $\mcu$ to copies of  $\rho$, and then measuring in the basis $\mcw$. Upon sampling a given $U\in\mcu$ and subsequently measuring a particular $\ket{w}\in\mcw$ one stores the state $U^\dagger\ketbra{w}U$, the net effect of which is to implement the quantum channel
\begin{align}
\mcm(\rho)&=   \sum_w\int_{U\sim\hspace{0.5mm}\mcu}\Tr[U\rho U^\dagger\ketbra{w}] U^{\dagger }\ketbra{w}U^{}\nonumber\\
&=  \Tr_1\left[\left(\rho \otimes \id\right) \sum_w\int_{U\sim\hspace{0.5mm}\mcu} U^{\dagger\otimes 2}\ketbra{w}^{\otimes 2}U^{\otimes 2}\right]\label{eq:mc}\,.
\end{align}
For a given obtained pair $(U,w)$ one additionally defines a \textit{classical shadow} $\hat{\rho}=\mcm^{-1}(U^\dagger\ketbra{w}U)$ of $\rho$. In general the measurement channel $\mcm$ will fail to be invertible, and the pseudo-inverse is instead used. The utility of the classical shadows lies in their capacity to be used as a proxy for $\rho$ itself for the purposes of estimating expectation values of operators that lie within the image of $\mcm$; indeed we have
\begin{align*}
\mbe_{U,w\hspace{0.5mm}|\hspace{0.5mm}\rho} \Tr[\hat{\rho}O]&= \mbe_{U,w\hspace{0.5mm}|\hspace{0.5mm}\rho} \Tr[\mcm^{-1}(U^\dagger\ketbra{w}U)O]\\
&=\mbe_{U,w\hspace{0.5mm}|\hspace{0.5mm}\rho} \Tr[U^\dagger\ketbra{w}U\mcm^{-1}(O)]\\
&= \sum_w\int_{U\sim\hspace{0.5mm}\mcu}\Tr[U\rho U^\dagger\ketbra{w}]  \Tr[U^\dagger\ketbra{w}U\mcm^{-1}(O)]\\
&= \Tr[\rho\sum_w\int_{U\sim\hspace{0.5mm}\mcu} U^\dagger\ketbra{w}U  \Tr[U^\dagger\ketbra{w}U\mcm^{-1}(O)]]\\
&=\Tr[\rho(\mcm\circ\mcm^{-1})(O)]\\
&=\Tr[\rho\hspace{0.5mm}{\rm proj}_{\mathfrak{Im}(\mcm)}(O)]\,.
\end{align*}
So for operators within the ``visible space'' $\mathfrak{Im}(\mcm)$ of the protocol, the shadows reproduce in expectation the correct value. The remaining pertinent question is that of how many  shadows one expects to need before being confident that the obtained empirical average  closely matches the true value, which is controlled by the variance of the estimators. This too may be readily calculated, via the standard expression
\begin{equation*}
{\rm Var}[\hat{o}_i ]=\mbe[\hat{o}_i^2]-\mbe[\hat{o}_i]^2.
\end{equation*}
These two expectations may in turn be written as
\begin{align}
\mbe[\hat{o}^2] &= \sum_w\int_{U\sim\hspace{0.5mm}\mcu}   \Tr\left[\rho U^\dagger\Pi_wU\right]\Tr\left[O \mcm^{-1}\left(U^\dagger\Pi_wU\right)\right] ^2 \nonumber \\
&=\sum_w  \Tr\bigg[\left(\rho\otimes \mcm^{-1}(O)\otimes  \mcm^{-1}(O)\right) \int_{U\sim\hspace{0.5mm}\mcu}  U^{\dagger\otimes 3}\Pi_w^{\otimes 3}U^{\otimes 3} \bigg] \,,\label{eq:ea2}
\end{align}
and
\begin{align}
\mbe[\hat{o}]^2 &=\left( \sum_w\int_{U\sim\hspace{0.5mm}\mcu}   \Tr\left[\rho U^\dagger\Pi_wU\right]\Tr\left[O\mcm^{-1}\left(U^\dagger\Pi_wU\right)\right]\right) ^2 \nonumber \\
&= \left(\sum_w  \Tr\left[\left(\rho\otimes  \mcm^{-1}(O)\right)\int_{U\sim\hspace{0.5mm}\mcu}  U^{\dagger\otimes 2}\Pi_w^{\otimes 2}U^{\otimes 2} \right] \right) ^2 \label{eq:e2a}\,.
\end{align}
With the help of these expressions we are now in a position to see that the proof of Result~\ref{result:shad} is fairly immediate. 

\shad*
\begin{proof}

From the characterization Eqs.~\eqref{eq:mc},~\eqref{eq:ea2} and~\eqref{eq:e2a} of classical shadow protocols as relying solely on the second and third state-moments, 
the result now readily follows from the fact that the unitary symplectic matrices form unitary state 2- and 3-designs (and immediately applies to any ensemble with this property). For example,  the measurement channels are identical:
\begin{align*}
  \mcm_{\mathbb{SP}(d/2)}(A)&=  \Tr_1 \left[(A \otimes \id)\sum_w\int_{U\in \mathbb{SP}(d/2)}   U^{\dagger\otimes 2}\Pi_w^{\otimes 2}U^{\otimes 2}\right] \\
  &=  \Tr_1 \left[(A \otimes \id)\sum_w\int_{U\in \mathbb{U}(d)}   U^{\dagger\otimes 2}\Pi_w^{\otimes 2}U^{\otimes 2}\right] \\
  &=  \mcm_{\mathbb{U}(d)}(A)\,,
\end{align*}
where we have used the fact that $\mathbb{SP}$ is a unitary state 2-design. The equality of the terms in the expressions Eqs.~\eqref{eq:ea2} and~\eqref{eq:e2a} for the variance of the estimators follows with a similar immediacy.

\end{proof}
\end{document}